\title{Cognitive Wireless Powered Network: \\ Spectrum Sharing Models and \\ Throughput Maximization
\thanks{The authors are with the Department of Electrical and Computer Engineering, National University of Singapore (email:  s.lee@u.nus.edu, elezhang@nus.edu.sg). R. Zhang is also with the Institute for Infocomm Research, A*STAR, Singapore.} 
}
\author{Seunghyun Lee,~\IEEEmembership{Student Member,~IEEE} and  Rui Zhang,~\IEEEmembership{Senior Member,~IEEE}}
\newtheorem{corollary}{\underline{Corollary}}[section]
\newtheorem{proposition}{\underline{Proposition}}[section]
\def\phi{\varphi}
\def\l{\left}
\def\r{\right}
\def\({\left(}
\def\){\right)}
\def\b0{{\mathbf{0}}}
\def\cF{\mathcal{F}}
\def\cL{\mathcal{L}}
\def\cR{\mathcal{R}}
\newcommand{\nn}{\nonumber}
\def\Pmax{P_\mathsf{max}}
\begin{document}
\maketitle \thispagestyle{empty}
\begin{abstract}
The recent advance in radio-frequency (RF) wireless energy transfer (WET) has motivated the study of \emph{wireless powered communication network} (WPCN), in which distributed wireless devices are powered via dedicated  WET by the hybrid access-point (H-AP) in the downlink (DL) for uplink (UL) wireless information transmission (WIT). In this paper, by utilizing the cognitive radio (CR) technique,  we study a new type of CR enabled secondary WPCN, called cognitive WPCN, under spectrum sharing with the primary wireless communication system. In particular, we consider a cognitive WPCN, consisting of one single H-AP with constant power supply and distributed wireless powered users, shares the same spectrum for its DL WET and UL WIT with an existing primary communication link, where the WPCN's WET/WIT and the primary link's WIT may interfere with each other. Under this new setup, we propose two coexisting models for spectrum sharing of the two systems, namely \emph{underlay} and \emph{overlay} based cognitive WPCNs, depending on different types of knowledge on the primary user transmission available for the cognitive WPCN. For each model, we maximize the sum-throughput of the cognitive WPCN by optimizing its transmission under different constraints applied to protect the primary user transmission. Analysis and simulation results are provided to compare the sum-throughput of the cognitive WPCN versus the achievable rate of the primary user under two proposed coexisting models. It is shown that the overlay based cognitive WPCN outperforms the underlay based counterpart, thanks to its fully cooperative WET/WIT design with the primary WIT, while also requiring higher complexity for implementation.        

\end{abstract}

\begin{IEEEkeywords}
Wireless powered communication network (WPCN), radio-frequency (RF) energy harvesting,  wireless energy transfer,  cognitive radio, spectrum sharing, throughput maximization.
\end{IEEEkeywords}

\section{Introduction} \label{Section:Introduction}
\PARstart{R}ECENTLY, wireless powered communication network (WPCN) enabled by radio-frequency (RF) wireless energy transfer (WET) has emerged as a promising new area of research in wireless communication (see, e.g., \cite{J_BHZ:2015} and the references therein). A typical model of WPCN  is illustrated in Fig.~\ref{Fig:GenericModel} \cite{J_JZ:2014}, where a WET-enabled hybrid access point (H-AP) with constant power supply coordinates the downlink (DL) WET and uplink (UL) wireless information transmission (WIT) to/from a set of distributed user terminals. A practical transmission protocol for the WPCN  is the so-called ``harvest-then-transmit" protocol \cite{J_JZ:2014}, as shown in Fig.~\ref{Fig:HarvestThenTransmit}, where in each transmission block of duration $T$,\footnote{For convenience, we assume normalized block time with $T=1$ in the sequel of this paper.} all users first harvest energy from the DL WET by the H-AP during the \emph{WET phase} with the first $\tau$ amount of time, and then send their information to the H-AP during the \emph{WIT phase} of remaining time $T-\tau$ using their individually harvested energy. By optimally coordinating and scheduling the WET and WIT in a WPCN, the user terminals can be free from energy depletion, even without any fixed energy sources, and achieve more reliable throughput than conventional wireless networks with battery powered devices where the battery needs to be manually replaced/recharged after depletion. Besides, different from the existing technique of harvesting energy from ambient RF transmission that is non-intended for energy transfer, wireless devices in a WPCN are powered by dedicated WET that is fully controllable in waveform design, power level, and time/spectrum allocation, thus providing far more reliable and substantial power supplies. It is worth noting that current WET technology can already deliver tens of microwatts RF power to wireless devices from a distance of more than 10 meters,\footnote{Please refer to the company website of Powercast Corp. (http://www.powercastco.com) for more information on RF-based WET.} and thus WPCN is suitable for low-power applications such as wireless sensor networks and RF identification (RFID) networks where device operating power is typically low. 

\begin{figure} 
\centering
\includegraphics[width=8cm]{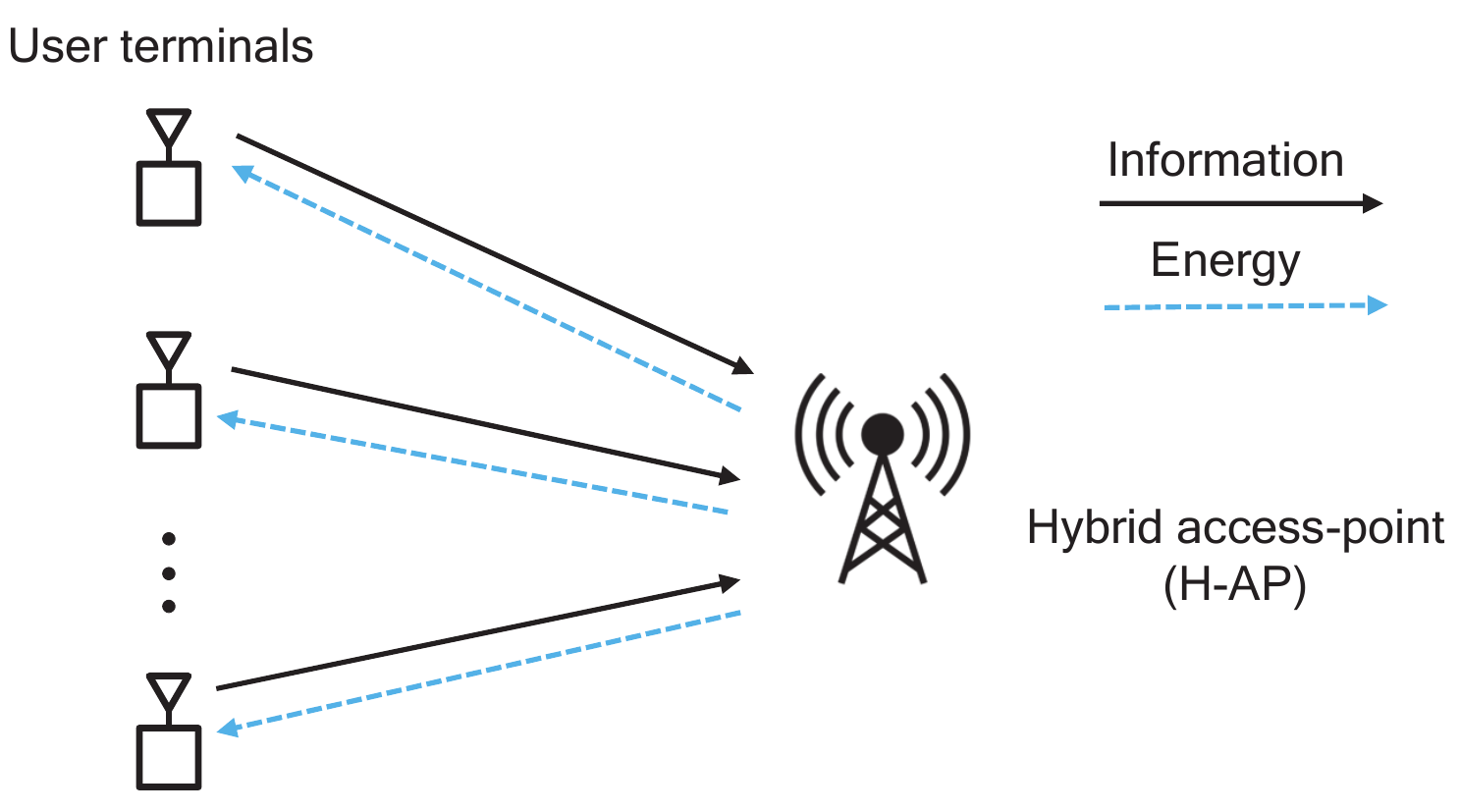}
\caption{A WPCN with downlink wireless energy transfer (WET) and uplink  wireless information transmission (WIT).  } \label{Fig:GenericModel}
\end{figure}
\begin{figure} 
\centering
\includegraphics[width=8cm]{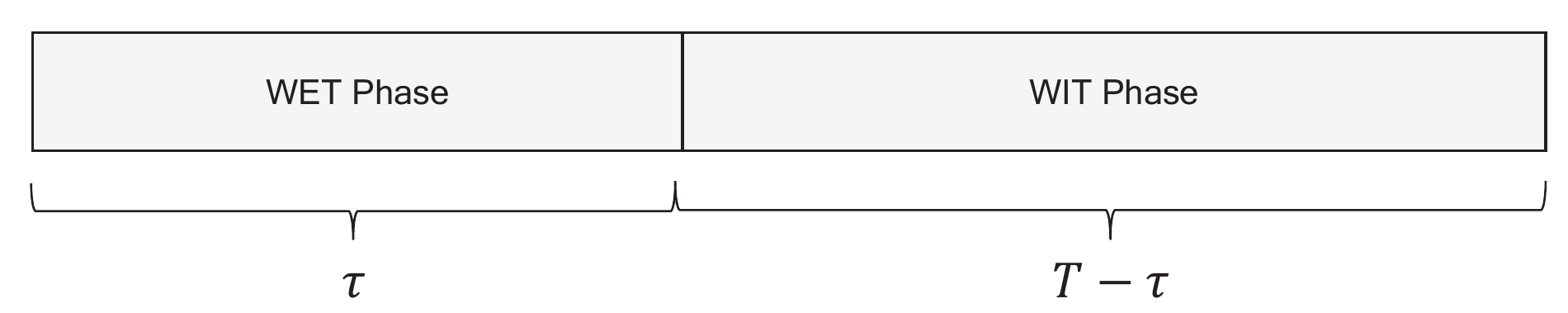}
\caption{The harvest-then-transmit protocol for WPCN \cite{J_JZ:2014}.} \label{Fig:HarvestThenTransmit}
\end{figure}

%\begin{figure}
%\centering
%\subfigure[]{
%\centering
%\includegraphics[width=9cm]{./Figs/GenericModel.pdf} \label{Fig:GenericModel}}
%\subfigure[The harvest-then-transmit protocol in WPCN.]{
%\centering
%\includegraphics[width=10cm]{./Figs/HarvestThenTransmit.pdf} \label{Fig:HarvestThenTransmit}}
%\caption{A WPCN with downlink wireless energy transfer (WET) and uplink  wireless information transmission (WIT).} \label{Fig:WPCN} 
%\end{figure}

\subsection{Prior Works}

As a promising solution to achieve perpetually operating wireless networks,  WPCN has been thoroughly investigated in the literature recently (see, e.g., \cite{J_JZ:2014, J_JZ:2014b, C_JZ:2014, J_CLRUV:2015, J_LZC:2014, J_YHZG:2015, J_HL:2014, J_CDZ:2015}). 
Based on the harvest-then-transmit protocol, the sum-throughput maximization in a WPCN with time-division-multiple-access (TDMA) based WIT is first studied in \cite{J_JZ:2014}. This work is then extended to the setup with a full-duplex H-AP that is able to transmit energy and receive information simultaneously over the same band in \cite{J_JZ:2014b}. WPCN with cooperative user communications is also studied in  \cite{C_JZ:2014, J_CLRUV:2015}, where the users with more harvested energy than others can help relay their information to the H-AP, thus achieving more efficient energy utilization in the system. In addition,
 \cite{J_LZC:2014} extends the study in \cite{J_JZ:2014} to a multi-antenna H-AP setup, where more efficient DL WET and UL WIT are achieved via energy beamforming and space-division-multiple-access (SDMA) based transmission, respectively. More recently, \cite{J_YHZG:2015} studies the WPCN with a large number of antennas at the H-AP to significantly enhance the WET/WIT efficiency, by exploiting the ``massive multiple-input multiple-output (MIMO)" gains. Besides, spatial capacity of large-scale WPCNs is studied in \cite{J_HL:2014, J_CDZ:2015}, based on the tools from stochastic geometry. It is also worth pointing out that since RF signals can convey both information and energy at the same time, another appealing line of related research is to explore a dual use of RF signals to optimize the performance of \emph{simultaneous wireless information and power transfer} (SWIPT)  \cite{J_ZH:2013,J_ZZH:2013}.  Various models for different SWIPT applications have been studied to characterize the performance trade-offs between WIT and WET with the same transmitted signal, including e.g., broadcast channel \cite{J_ZH:2013,J_XLZ:2014}, interference channel  \cite{J_PC:2013, J_LLZ:2015}, and relay channel \cite{J_DKSP:2014}.

\begin{figure} 
\centering
\includegraphics[width=8.5cm]{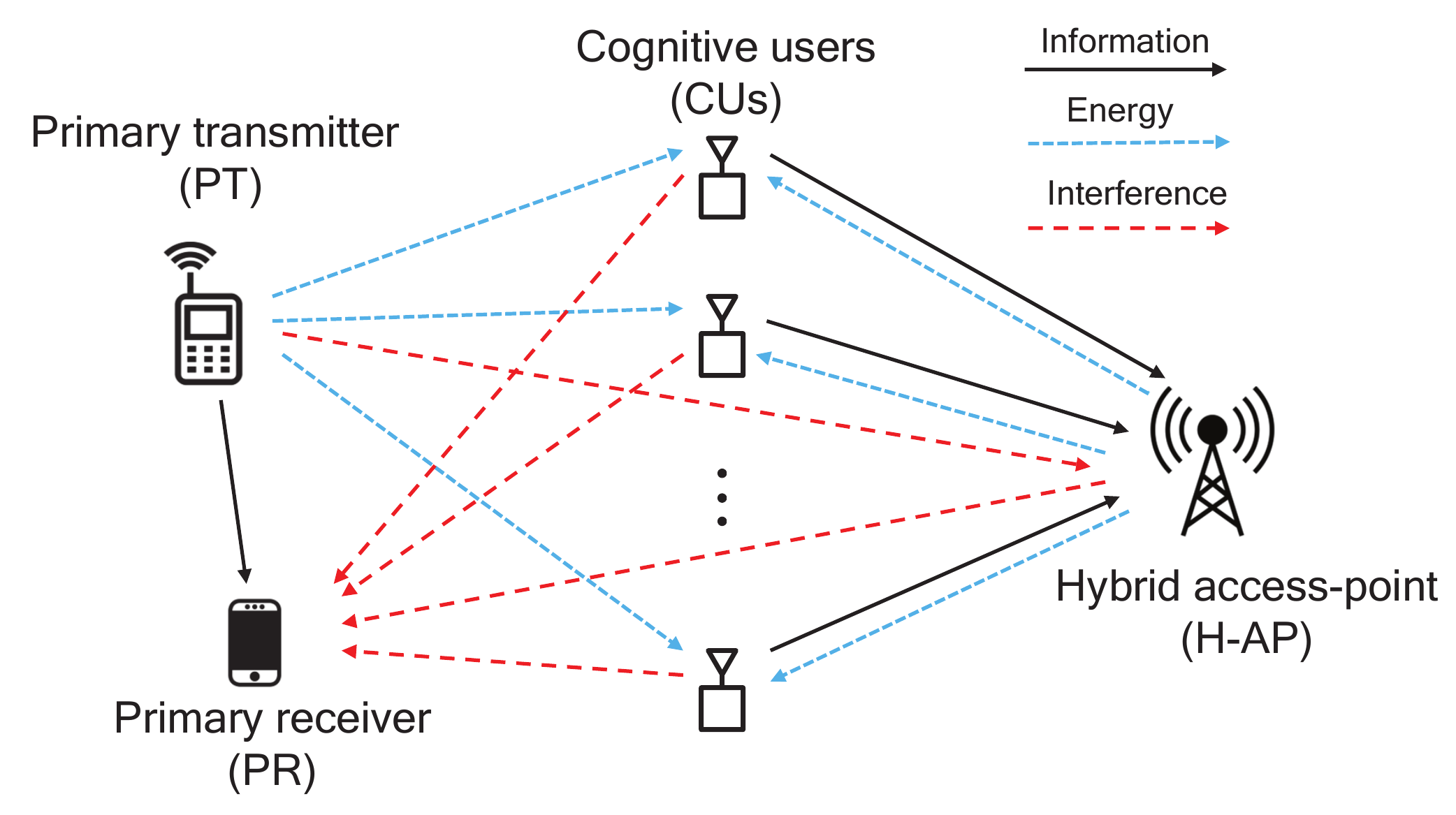}
\caption{Schematic of a secondary cognitive WPCN coexisting with a primary communication link.} \label{Fig:SystemModel}
\end{figure}

\subsection{Motivation and Proposed Models}
However, the above works on WPCN are all based on the assumption that the system can operate in a given frequency band for exclusive use, and thus there is no need to consider the effects of the interference with other wireless systems operating in the same band. Since the available spectrum is currently highly limited due to the ever-expanding wireless systems and applications, finding new available spectrum for future applications such as WPCN will be a challenging task. One promising solution to this issue can be obtained by leveraging the celebrated cognitive radio (CR) technique that can efficiently enable a secondary system to reuse the spectrum of a primary system in an opportunistic manner (see, e.g., \cite{J_ZS:2007, J_GJMS:2009, J_ZLC:2010} and the references therein). This thus motivates our joint investigation of the WPCN and CR based network design in this paper to efficiently implement WPCN without the need of any new spectrum allocated. Specifically, as shown in Fig.~\ref{Fig:SystemModel}, we consider spectrum sharing between a primary WIT link consisting of a single pair of primary transmitter (PT) and primary receiver (PR), and a secondary WPCN consisting of a set of distributed cognitive users (CUs) and one single H-AP, which coordinates the WET/WIT in the WPCN to maximize the throughput of CUs while maintaining the required quality of service for the primary transmission. The CR enabled secondary WPCN is thus referred to as \emph{cognitive WPCN} (CWPCN). As compared to conventional CR networks, the hybrid WET/WIT in CWPCN brings new challenges to the design of optimal transmission. On one hand, the interference from the CWPCN to the primary transmission can be potentially more severe due to the WET in addition to WIT than in conventional CR network. On the other hand, compared to a ``stand-alone" WPCN as in \cite{J_JZ:2014}, the CUs in CWPCN can harvest more energy from the primary interference signal in addition to the dedicated WET from the H-AP. It is worth noting that CR networks powered by RF energy harvesting from the nearby primary transmitters have been studied in e.g., \cite{J_LZH:2013, J_HNWK:2014}. However, different from \cite{J_LZH:2013, J_HNWK:2014}, where CUs scavenge energy from primary transmission only, in our studied CWPCN the CUs are mainly powered via dedicated WET by the H-AP, although they can also be opportunistically powered by the PT transmission, as shown in Fig.~\ref{Fig:SystemModel}. 

\begin{figure}
\centering
\subfigure[U-CWPCN]{
\centering
\includegraphics[width=8.5cm]{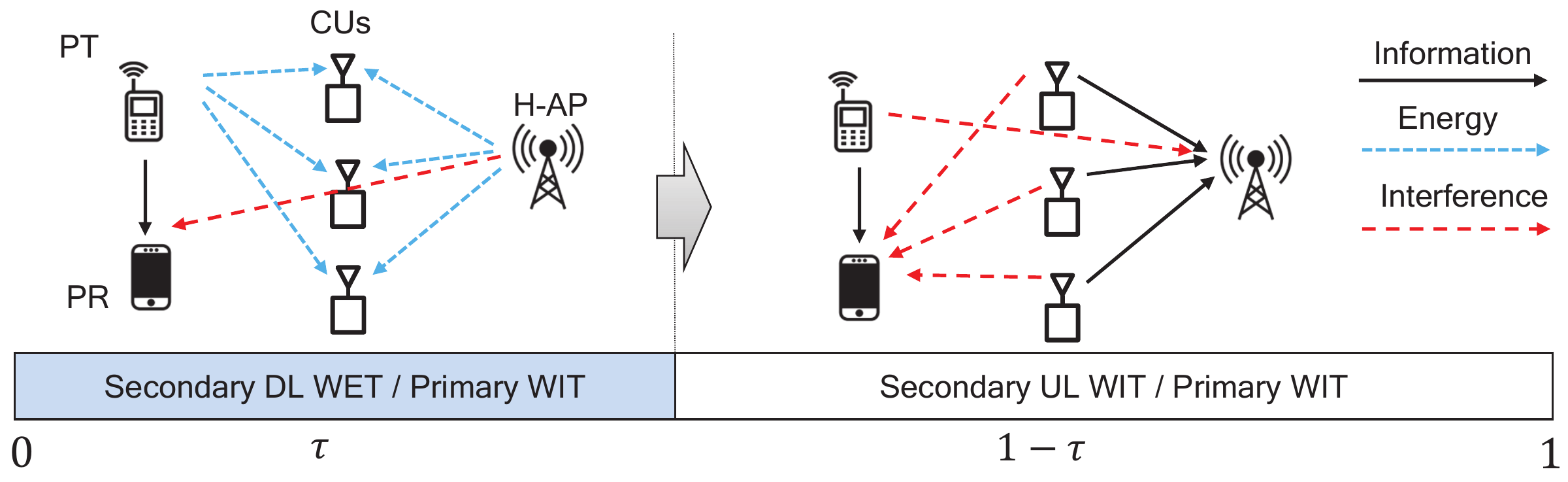}\label{Fig:Underlay}} 
\subfigure[O-CWPCN]{
\centering
\includegraphics[width=8.5cm]{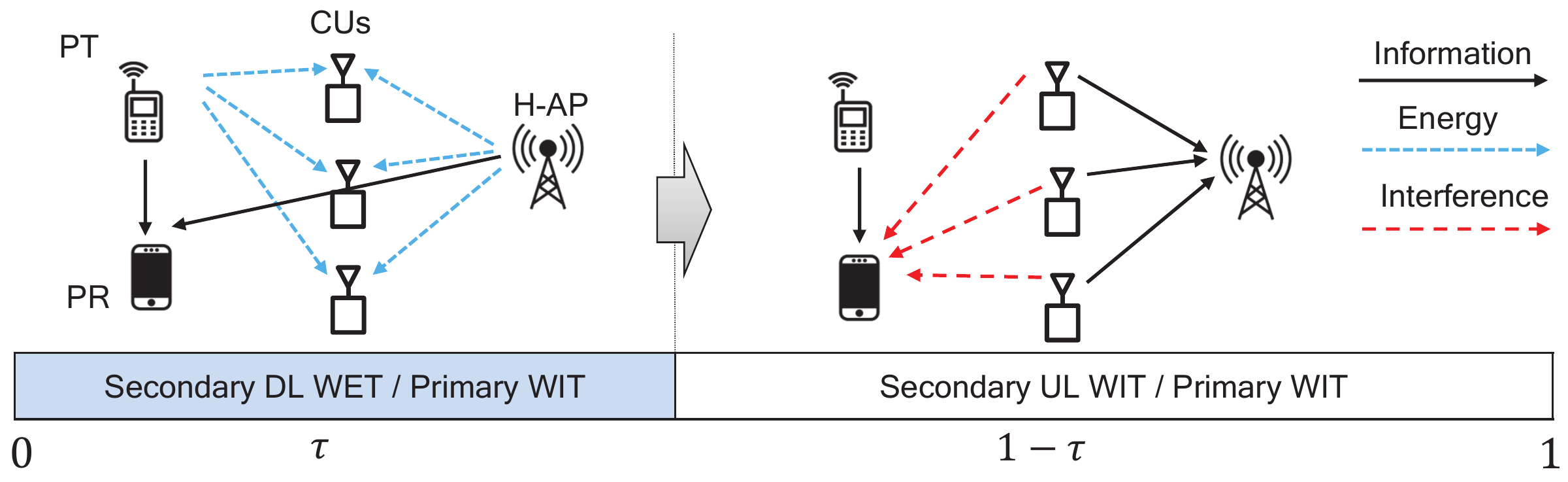}\label{Fig:Overlay}} 
\subfigure[Interweave based CWPCN]{
\centering
\includegraphics[width=8.5cm]{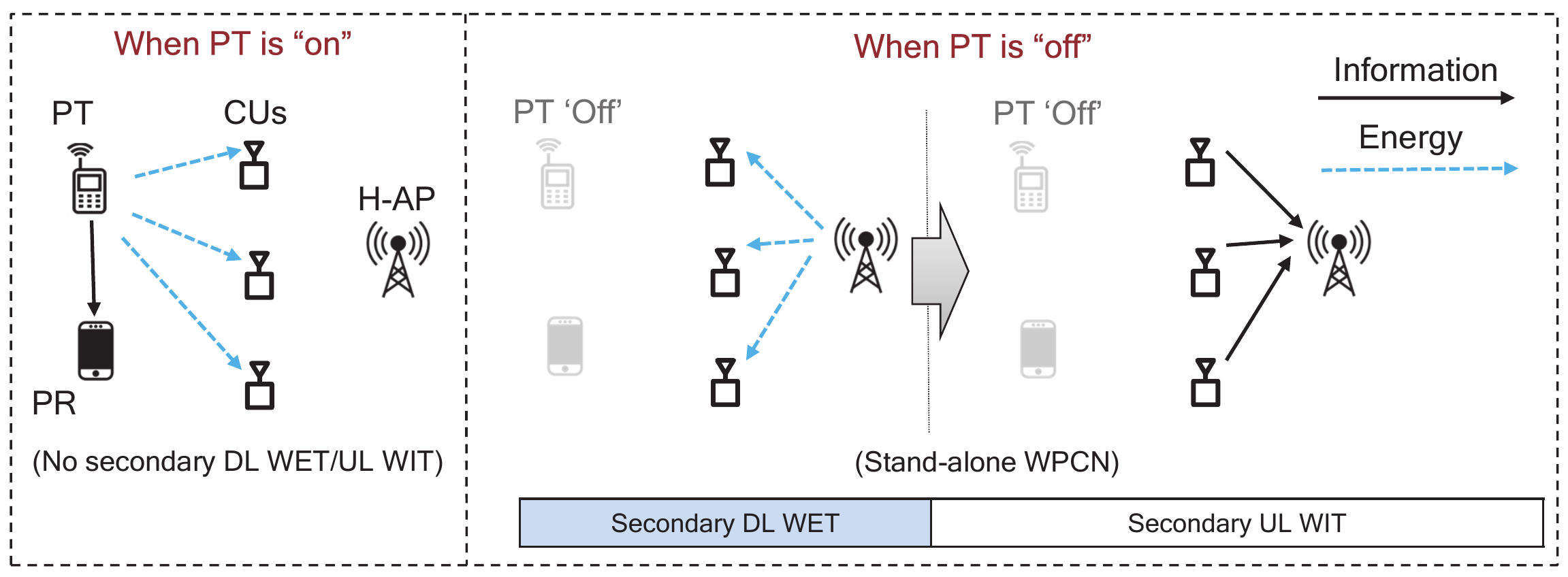}\label{Fig:Interweave}} 
\caption{Comparison of U-CWPCN, O-CWPCN, and interweave based CWPCN.}
\label{Fig:}
\end{figure}

Under the new CWPCN setup, in this paper we propose two coexisting models for spectrum sharing, based on two existing models for  conventional CR networks \cite{J_GJMS:2009}, namely  \emph{underlay} and \emph{overlay} based CWPCNs (U-CWPCN and O-CWPCN), respectively, depending on different types of  knowledge on the primary transmission known by the CWPCN.\footnote{Note that \cite{J_ZS:2007} and \cite{J_GJMS:2009} use different definitions of the terms \emph{underlay} and \emph{overlay} CRs. In this paper, we follow the definitions in \cite{J_GJMS:2009} for the underlay and overlay CRs.} First, consider U-CWPCN as depicted in Fig.~\ref{Fig:Underlay}, which operates under the harvest-then-transmit protocol. Assuming the knowledge of the channel gains from the H-AP/CUs to the PR, both the H-AP's DL WET and the CUs' UL WIT can be jointly designed to maximize the throughput of CUs while meeting the so-called \emph{interference-temperature constraint} (ITC) \cite{J_ZLC:2010}, to ensure that the resulting interference levels at the PR from both the H-AP's DL WET and CUs' UL WIT do not exceed a predefined threshold in the WET phase and WIT phase, respectively. In contrast, for O-CWPCN, it is assumed that the H-AP knows additionally the channels from the PT to the PR/H-AP as well as the PT's messages to be sent to the PR, similarly as in \cite{J_GJMS:2009, J_DMT:2006, J_JV:2009}. With such knowledge, as illustrated in Fig.~\ref{Fig:Overlay}, during the WET phase, the H-AP can  cooperatively send the PT's message using its WET signal to the PR; while in the WIT phase, the H-AP is able to cancel the interference from the PT before decoding the CUs' messages. In this model, the CWPCN's transmission can be designed by applying a more effective \emph{primary rate constraint} (PRC) than the simple ITC in U-CWPCN thanks to the additional channel knowledge on the primary link, where the achievable rate of the primary transmission is guaranteed to be no smaller than a predefined target. 

Notice that in the above two spectrum sharing models, we have assumed that the primary transmission is continuous or always on.  However, if the PT transmits only for a fraction of time and this ``on" time period can be perfectly detected by the CWPCN,\footnote{In this interweave CR based WPCN, an important problem is to detect the coexisting primary user's on/off states via spectrum sensing. Advanced techniques such as cooperative spectrum sensing (see, e.g., \cite{J_ZML:2009} and the references therein) can be used to considerably improve the sensing reliability. Specifically, in our studied system, each of the CUs observes the primary transmission via e.g., energy detection and then forwards its sensing result to the H-AP where a joint decision on whether the PT is active or not is made.  } the CWPCN can work as a stand-alone WPCN when the PT is off, and when the PT is on, the CUs can harvest additional energy from the PT's WIT  whereas the H-AP keeps silent in order not to interfere with the primary transmission, as illustrated in Fig.~\ref{Fig:Interweave}. In this case, given the fraction of time during which the PT is off, the transmit optimization for a single CWPCN has been studied in \cite{J_JZ:2014} without coexisting primary users. Therefore, we omit our discussion on such ``interweave" CR \cite{J_GJMS:2009} based WPCN in this paper, and focus our study on the more challenging U-CWPCN and O-CWPCN models with the coexisting PT that transmits continuously. 

\subsection{Main Results and Organization}
The main results of this paper are summarized as follows.
\begin{itemize}
\item 
For each of the two coexisting models proposed, we formulate the sum-throughput maximization problem for the CWPCN by jointly optimizing the time allocation for the harvest-then-transmit protocol and the WET/WIT transmit power in the CWPCN, under different criteria used to protect the primary  transmission. 

\item
Specifically, for the case of U-CWPCN under a given ITC at the PR in both the WET and WIT phases, we propose an efficient algorithm to solve the CWPCN sum-throughput maximization problem by separately optimizing the WET/WIT time allocation and transmit power in the WIT phase.

\item
While for the case of O-CWPCN, we study the sum-throughput maximization problem under a given PRC instead of ITC in the case of U-CWPCN. The problem is shown to be non-convex. Nonetheless, we solve the problem optimally by solving a sequence of time/power allocation problems in the CWPCN, each subject to a given ITC, and iteratively searching for the optimal ITC level to maximize the sum-throughput of the CWPCN. Furthermore, we show that each aforementioned subproblem with a given ITC can be solved by leveraging the solution obtained for the sum-throughput maximization problem in the U-CWPCN case.

\item Finally, we compare the performance of the two coexisting models in terms of the primary link's achievable rate versus CWPCN's sum-throughput trade-off. We show analytically that the achievable rate region of the O-CWPCN is always no smaller than that of the U-CWPCN, thanks to its fully cooperative WET/WIT design with the primary WIT. Extensive simulation results under various setups are also provided to compare the performance in practical systems. 

\end{itemize}

The rest of this paper is organized as follows. Section~\ref{Section:SystemModel} introduces the system model and the problem formulation for the proposed U-CWPCN and O-CWPCN models, and compares the performance of the two models analytically. Sections~\ref{Section:Underlay} and \ref{Section:Overlay} present the solutions of the sum-throughput maximization problems for each of the two models, respectively. Section~\ref{Section:Simulation} provides simulation results. Section~\ref{Section:Conclusion} concludes the paper and discusses future work. 

Some abbreviations used in this paper are summarized in Table I.

\begin{table} 
\caption{List of Abbreviations}
\centering
\begin{tabular}{l l} 

\hline
RF & Radio-frequency \\
WET &  Wireless energy transfer \\
WIT & Wireless information transmission \\
DL & Downlink \\
UL & Uplink \\
CR & Cognitive radio \\
WPCN & Wireless powered communication network \\
CWPCN & Cognitive WPCN \\
U-CWPCN & Underlay based CWPCN \\
O-CWPCN & Overlay based CWPCN \\
H-AP & Hybrid access-point \\
CU & Cognitive user \\
PT & Primary transmitter \\
PR & Primary receiver \\ 
ITC & Interference-temperature constraint \\
PRC & Primary rate constraint \\
CSI & Channel state information \\

\hline
\end{tabular}
\label{Table:Notation}
\end{table}

\section{System Model and Problem Formulation} \label{Section:SystemModel}

As shown in Fig.~\ref{Fig:SystemModel}, this paper studies a simplified CWPCN, in which a secondary WPCN consisting of $K$ CUs and a single H-AP shares the same bandwidth for WET and WIT with a primary WIT link consisting of a single pair of PT and PR. It is assumed that all terminals are each equipped with a single antenna, and operate in half-duplex mode. In addition, all CUs are assumed to have no fixed power supplies, and thus need to replenish energy by harvesting RF energy from the H-AP's DL WET and/or the PT's WIT (to the PR); the harvested energy at each CU is stored in a rechargeable battery and then used for its  UL WIT. Unlike CUs, it is assumed that the PT and the H-AP have constant power supplies.

We assume a quasi-static flat-fading channel model, where all the channels involved in the network remain constant during each block transmission time with normalized duration $T=1$,\footnote{Hence, in this paper we use the terms of energy and power interchangeably.} but can vary from block to block. Furthermore, we assume coherent communication for all links and thus only the channel power gain needs to be considered. Denote $\hat{h}$, $h_i$, and $\hat{h}_i$, $i=1,...,K$, as the channel power gains from the H-AP of CWPCN to PR, the H-AP to CU$_i$ (and vice versa, by assuming the channel reciprocity), and CU$_i$ to PR, respectively. Similarly, $g$, $g_i$, $i=1,...,K$, and $\hat{g}$  are defined as the channel power gains from the PT to PR, CU$_i$, and H-AP, respectively.  

For the secondary CWPCN, in this paper we adopt the ``harvest-then-transmit" protocol proposed in \cite{J_JZ:2014}, as shown in Fig.~\ref{Fig:HarvestThenTransmit}. Thus, in each transmission block, the transmission in the CWPCN is divided into two phases. In the first \emph{WET phase} of duration $\tau$, each CU performs energy harvesting, while in the subsequent \emph{WIT phase} of duration $1-\tau$, all CUs transmit independent information to the H-AP simultaneously using their individually harvested energy. 

In the following subsections, we introduce the two coexisting models for CWPCN, namely U-CWPCN and O-CWPCN, respectively, and formulate the sum-throughput maximization problem for the CWPCN in each case. Then, we analytically compare the rate performance of the two models.

\subsection{Underlay Based CWPCN (U-CWPCN)} \label{Subsection:Underlay}
First, consider the U-CWPCN depicted in Fig.~\ref{Fig:Underlay}. In the WET phase with the first $\tau$ amount of time, where the H-AP broadcasts signal to all the CUs for DL WET, the total harvested power at CU$_i$ from both the H-AP and the PT can be expressed as (considering independent transmitted signals of H-AP and PT)
\begin{equation} \label{Eq:Energy}
Q_i = \eta_i(h_i P_c + g_i P_p), \quad i=1,...,K,
\end{equation}
where $0<\eta_i \leq 1$ is the energy harvesting efficiency at CU$_i$, for which we assume $\eta_i = 1$, $i=1,...,K$, in the sequel for convenience, unless otherwise noted; $P_p$ and $P_c$ denote the transmit power of the PT and the H-AP, respectively. In the remaining WIT phase with $(1-\tau)$ amount of time, which is assigned to the CUs' UL WIT to the H-AP, the achievable sum-throughput (in bits/sec/Hz) at the H-AP based on optimal successive decoding is given by \cite{B_CT:1991} 
\begin{equation} \label{Eq:CR Rate}
(1-\tau) \log_2\l(1 + \sum_{i=1}^K\frac{h_i}{\sigma^2_c + \hat{g} P_p} \frac{e_i}{1-\tau}\r),
\end{equation}
where $\sigma_c^2$ denotes the receiver noise at the H-AP, and $0\leq e_i\leq \tau Q_i$, $i=1,...,K$, indicates the amount of harvested energy used for CU$_i$'s WIT, with $\tau Q_i$ denoting the total harvested energy at each CU$_i$ during the WET phase.\footnote{In this paper, we assume that the main energy consumption at the transmitter in each CU is due to its WIT, and thus the energy consumptions due to other sources such as circuit energy are ignored for simplicity. Furthermore, we focus on the transmission optimization in a single block for a given set of channel realization without considering initial energy at each CU. Nonetheless, the solutions we obtained for (P1) and (P2) can easily be extended to solve more general problems with initial energy consideration at each CU. } It should be noted that \eqref{Eq:CR Rate} takes into account the interference with power $\hat{g}P_p$, experienced at the H-AP due to the PT's WIT.\footnote{It is worth noting that if the primary user's signal can be opportunistically decoded and subtracted \cite{J_ZC:2012} at the H-AP, the secondary sum-throughput can be further improved as compared to that given in \eqref{Eq:CR Rate}. However, we do not consider this scheme further in this paper for the simplicity of the U-CWPCN design without assuming any knowledge of the primary transmission. } 

In U-CWPCN, the H-AP's DL WET and the CUs' UL WIT are regulated by the given ITC \cite{J_ZLC:2010}, such that the interference power at the PR is kept no larger than a predefined threshold, denoted by $\Gamma\geq 0$. Accordingly, during the WET phase, we set the transmit power of the H-AP as 
\begin{equation} \label{Eq:P_c Underlay}
P_c = \min\l(\frac{\Gamma}{\hat{h}}, \Pmax\r), 
\end{equation}
where $\Pmax$ is the maximum transmit power of the H-AP, while during the WIT phase, the ITC is applied so that
\begin{equation}\label{Eq:ITConstraint}
\sum_{i=1}^K \hat{h}_i \frac{e_i}{1-\tau} \leq \Gamma.
\end{equation}
Note that to apply the ITC, we assume that the channel power gains from the H-AP to PR (i.e., $\hat{h}$) and from CU$_i$ to PR (i.e., $\hat{h}_i$, $i=1,...,K$), are perfectly known at the CWPCN.

For U-CWPCN, the sum-throughput maximization problem that jointly optimizes the DL-UL time allocation $\tau$ and each CU$_i$'s transmit energy $e_i$ can be formulated as follows.
\begin{align}
\mathrm{(P1)}:~\mathop{\mathtt{Maximize}}_{\tau,\{e_i\}} &~~  (1-\tau) \log_2\l(1+\frac{1}{1-\tau}\sum_{i=1}^K \gamma_i e_i\r)  \label{Eq:P1 Objective}\\
\mathtt{subject\; to}
&~~0\leq \tau \leq 1 \nn\\
&~~0\leq e_i \leq \tau Q_i, \; i=1,...,K  \label{Eq:P1Constraint2}\\
&~~\sum_{i=1}^K \hat{h}_i e_i \leq (1-\tau)\Gamma,   \label{Eq:P1Constraint3} 
\end{align}
where in  \eqref{Eq:P1 Objective} we have  $\gamma_i \triangleq \frac{h_i}{\sigma^2_c + \hat{g} P_p}$ for notational convenience. Note that unlike the sum-throughput maximization in a stand-alone WPCN studied in \cite{J_JZ:2014}, in (P1) we need to consider the effect of the new ITC in \eqref{Eq:P1Constraint3} for the optimal time and energy allocation in the U-CWPCN. Besides, in \cite{J_JZ:2014} the suboptimal TDMA based UL WIT is assumed, while in (P1) we study the capacity-achieving sum-throughput. It is also worth pointing out that, as compared to the sum-capacity maximization problem studied in \cite{J_ZCL:2009} for conventional CR-based multiple-access channel where the secondary users' UL power allocation is optimized under the ITC, in (P1) the DL-UL time allocation needs to be jointly optimized with each CU$_i$'s UL transmit power due to the new consideration of the WET in the U-CWPCN.

Note that problem (P1) is a convex optimization problem, since the objective function \eqref{Eq:P1 Objective} can be shown to be concave, and the constraints are all linear. In Section~\ref{Section:Underlay}, we present an efficient optimal solution to (P1) in detail.

\subsection{Overlay Based CWPCN (O-CWPCN)} \label{Subsection:Overlay}
Next, consider the O-CWPCN. In overlay based CR, it is assumed that the coexisting secondary users have noncausal knowledge of the primary users' codewords to be transmitted; as a result, the secondary users can exploit such knowledge to assist primary  transmission by cooperatively relaying the primary message, as well as to cancel the interference at the secondary receivers due to the primary transmission  (for more details, see, e.g., \cite{J_GJMS:2009, J_DMT:2006, J_JV:2009}). It is worth noting that the primary messages can be made known to the secondary users through dedicated links connecting the primary and secondary transmitters. This is practically feasible in e.g., the coordinated multipoint (CoMP) transmission where multiple base stations (BSs) in a cellular network share their transmit messages and channel state information (CSI) via backhaul links to enable cooperative transmissions in order to mitigate/utilize the inter-cell interference \cite{J_GHHSSY:2010}. Similarly, in an overlay based CR, secondary users are considered as cooperative transmission nodes, and thus primary users are willing to share their transmit messages/CSI with the secondary users to utilize their signals in a beneficial way to achieve cooperative transmission.

Based on this scheme, our proposed O-CWPCN is explained as follows. Similar to the U-CWPCN, it is assumed that the first $\tau$ amount of time in each block transmission is for WET, and the remaining $(1-\tau)$ time is for  WIT, as shown in Fig~\ref{Fig:Overlay}. However, unlike U-CWPCN, it is assumed in O-CWPCN that the H-AP has the knowledge of the PT's message to be transmitted as well as the primary link's channel at the beginning of each block. With such knowledge, the H-AP can cooperatively send the PT's message using its WET signal in the WET phase to make it coherently added with the PT's signal at the PR, in return for higher transmit power for its WET (as compared to $P_c$ given in \eqref{Eq:P_c Underlay} in the U-CWPCN case)  as well as CUs' WIT. Accordingly, the average achievable rate of the primary link in a given block can be expressed as
\begin{align} \label{Eq:PrimaryRate}
R_{\text{P-O}}(\tau,\{e_i\}) &= \tau r_1   \nn\\ 
& + (1-\tau)\log_2\l(1+\frac{g P_p}{\sigma_p^2 +\frac{1}{1-\tau}\sum_{i=1}^K\hat{h}_i e_i}\r),
\end{align}
where $r_1\triangleq \log_2\l(1+\frac{g P_p + \hat{h}P_c + 2\sqrt{g\hat{h}P_p P_c}}{\sigma_p^2}\r)$ with $\sigma^2_p$ denoting the receiver noise power at the PR. In \eqref{Eq:PrimaryRate}, $\tau r_1$ and $(1-\tau) \log_2\l(1+\frac{g P_p}{\sigma_p^2 +\frac{1}{1-\tau}\sum_{i=1}^K\hat{h}_i e_i}\r)$ denote the primary user rates during the WET phase and WIT phase, respectively. It can be seen from \eqref{Eq:PrimaryRate} that, since $r_1 \geq \log_2\l(1+\frac{g P_p}{\sigma_p^2}\r)$, where $ \log_2\l(1+\frac{g P_p}{\sigma_p^2}\r)$ is the maximum primary user rate without the presence of the secondary CWPCN, the primary link may be able to achieve $R_{\text{P-O}}(\tau,\{e_i\}) \geq \log_2\l(1+\frac{g P_p}{\sigma_p^2}\r)$ with a proper choice of $(\tau,\{e_i\})$, i.e., it is possible that the primary rate performance can even be improved in O-CWPCN due to the H-AP's cooperative transmission in the WET phase. Also note that in this case the transmit power of the H-AP for its DL WET can be set to the maximum value, i.e., $P_c = \Pmax$, unlike that given in \eqref{Eq:P_c Underlay} for the case of U-CWPCN.

With \eqref{Eq:PrimaryRate}, we can use the PRC (instead of the ITC given in  \eqref{Eq:P1Constraint3} in U-CWPCN), such that the primary can achieve an average rate $R_{\text{P-O}}(\tau,\{e_i\})$ that is no less than a predefined threshold, denoted by $\bar{R}\geq 0$. Accordingly, we formulate the following sum-throughput maximization problem for O-CWPCN.
\begin{align}
\mathrm{(P2)}:~\mathop{\mathtt{Maximize}}_{\tau,\{e_i\}} &~~  (1-\tau) \log_2\l(1+\frac{1}{1-\tau}\sum_{i=1}^K \hat{\gamma}_i e_i\r) \label{Eq:P2Objective} \\
\mathtt{subject\; to}
&~~0\leq \tau \leq 1 \nn\\
&~~0\leq e_i \leq \tau Q_i, \; i=1,...,K  \label{Eq:P2Constraint2}\\
&~~  R_{\text{P-O}}(\tau,\{e_i\}) \geq \bar{R}, \label{Eq:P2Constraint3} 
\end{align}
where in \eqref{Eq:P2Objective} we have $\hat{\gamma_i} \triangleq  h_i/\sigma_c^2$, since the interference signal due to the PT's WIT can be perfectly canceled at the H-AP's receiver in the WIT phase, thanks to the knowledge of the PT's message and the channel knowledge from the PT to the H-AP.

 Unlike (P1), problem (P2) is a non-convex optimization problem, due to the PRC in \eqref{Eq:P2Constraint3}. Nevertheless, we will provide an efficient way to obtain the optimal solution of (P2) in Section~\ref{Section:Overlay}.

\subsection{Performance and Complexity Comparison}
In this subsection, we analyze the performance of the proposed U-CWPCN and O-CWPCN models, and compare their complexity for implementation in terms of the channel state information (CSI) required. For the purpose of exposition, we define the feasible solution sets of (P1) and (P2) as
\begin{align*}
&\cF_U(\Gamma) = \bigg\{(\tau,\{e_i\}): 0\leq \tau \leq 1, 0\leq e_i \leq \tau Q_i, \; i=1,...,K, \nn\\ 
&\qquad \sum_{i=1}^K \hat{h}_i e_i \leq (1-\tau)\Gamma, P_c = \min\l(\frac{\Gamma}{\hat{h}}, \Pmax\r) \bigg\}, 
\end{align*}
\begin{align*}
\cF_O(\bar{R}) = &\{(\tau,\{e_i\}): 0\leq \tau \leq 1, 0\leq e_i \leq \tau Q_i, \; i=1,...,K, \nn\\ & R_{\text{P-O}}(\tau,\{e_i\}) \geq \bar{R},P_c = \Pmax \}, 
\end{align*}
respectively. In addition, for notational brevity, we define a function
\begin{equation} \label{Eq:RateCWPCN}
R_{\text{C}}(\tau,\{e_i\},\{\delta_i\}) = (1-\tau)\log_2\l(1+\frac{1}{1-\tau}\sum_{i=1}^K \delta_i e_i\r),
\end{equation}
to represent the CWPCN's sum-throughput with different values of $\{\delta_i\}$ in each spectrum-sharing case. Then, our performance metric is the primary and secondary systems' \emph{achievable rate region}, defined as the set of rate-pairs for the primary link's rate and the CWPCN's sum-throughput that can be achievable at the same time under the associated constraint. Such achievable rate regions for U-CWPCN and O-CWPCN can be  defined as (see (P1) and (P2))
\begin{align} \label{Eq:RateRegionU} 
\cR_U = &\bigcup_{\Gamma\geq 0}\{(R_p, R_c): R_p \leq R_\text{P-U}(\tau,\{e_i\}), \nn\\ 
&\quad R_c\leq R_{\text{C}}(\tau,\{e_i\}, \{\gamma_i\}), (\tau,\{e_i\})\in \cF_U(\Gamma)\}, 
\end{align}
\begin{align} \label{Eq:RateRegionO}
\cR_O = &\bigcup_{\bar{R}\geq 0}\{(R_p, R_c): R_p \leq R_{\text{P-O}}(\tau,\{e_i\}), \nn\\ 
& \quad R_c\leq R_{\text{C}}(\tau,\{e_i\}, \{\hat{\gamma}_i\}), (\tau,\{e_i\})\in \cF_O(\bar{R})\},  
\end{align}
respectively, where in \eqref{Eq:RateRegionU} we have defined  
\begin{align*}
R_\text{P-U} & (\tau,\{e_i\})  =  \tau \log_2\l(1+\frac{g P_p}{\sigma_p^2 + \hat{h}P_c }\r)  \nn\\ 
& + (1-\tau)\log_2\l(1+\frac{g P_p}{\sigma_p^2 +\frac{1}{1-\tau}\sum_{i=1}^K\hat{h}_i e_i}\r),
\end{align*}
which represents the achievable rate of the primary link in the U-CWPCN, for a given $(\tau,\{e_i\})$. 

\begin{table}
\caption{Required CSI at Different  CWPCNs  (O: Required. X: Not required)}
\centering
\begin{tabular}{|l|cc| |cc || cc |}
\hline
        & \multicolumn{2}{c}{U-CWPCN} & \multicolumn{2}{c|}{O-CWPCN} \\
\hline
Channels & Power Gain      & Phase   & Power Gain      & Phase  \\
\hline 
$\hat{h}$: From H-AP to PR     		  & O               & X          & O               & O         \\
$h_i$: From H-AP to CU$_i$    		  & O               & O         & O               & O         \\
$\hat{h}_i$:  From CU$_i$ to PR        & O               & X          & O               & X         \\
$g_i$: From PT to CU$_i$   	         & O               & X          & O               & X         \\
$g$: From PT to PR      		         & X               & X          & O               & O         \\
$\hat{g}$: From PT to H-AP               & O               & X          & O               & O       \\ \hline
\end{tabular} 
\end{table}

\begin{proposition} \label{Proposition:Performance}
It holds that $\cR_U\subseteq \cR_O$, i.e., the O-CWPCN outperforms the U-CWPCN in general.
\end{proposition}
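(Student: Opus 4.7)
The plan is to prove the set inclusion directly by showing that every achievable rate pair in $\cR_U$ can be reproduced (with possibly larger coordinates) in $\cR_O$ using the same time-allocation and energy variables. Concretely, I will take an arbitrary $(R_p, R_c) \in \cR_U$, which by definition comes from some $\Gamma \geq 0$ and some $(\tau, \{e_i\}) \in \cF_U(\Gamma)$ with $R_p \leq R_\text{P-U}(\tau, \{e_i\})$ and $R_c \leq R_{\text{C}}(\tau, \{e_i\}, \{\gamma_i\})$, and exhibit a threshold $\bar{R}$ such that $(\tau, \{e_i\}) \in \cF_O(\bar{R})$ with the two corresponding rate inequalities preserved.

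The first step is to verify feasibility of $(\tau, \{e_i\})$ in the O-CWPCN problem. The constraints $0 \leq \tau \leq 1$ and $\sum_i \hat{h}_i e_i \leq (1-\tau)\Gamma$ from $\cF_U(\Gamma)$ are irrelevant for $\cF_O$. The only constraint that must be rechecked is $e_i \leq \tau Q_i$, and here the key observation is that the harvested energy level $Q_i$ is weakly larger in O-CWPCN than in U-CWPCN: in U-CWPCN the H-AP transmits at $P_c = \min(\Gamma/\hat{h}, \Pmax) \leq \Pmax$, whereas in O-CWPCN it transmits at $P_c = \Pmax$, and by \eqref{Eq:Energy} the monotonicity in $P_c$ transfers to $Q_i$. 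Hence feasibility is automatic once I set $\bar{R} \coloneqq R_\text{P-U}(\tau,\{e_i\})$, provided I can also verify the PRC $R_{\text{P-O}}(\tau,\{e_i\}) \geq \bar{R}$.

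The second step is a pair of monotonicity comparisons. For the primary rate, the WIT-phase contributions in $R_\text{P-U}$ and $R_{\text{P-O}}$ are identical, so it suffices to compare the WET-phase contributions. In the overlay model the WET-phase primary rate is $\tau r_1$ with $r_1 = \log_2\!\bigl(1 + (gP_p + \hat{h}\Pmax + 2\sqrt{g\hat{h}P_p\Pmax})/\sigma_p^2\bigr) \geq \log_2(1 + gP_p/\sigma_p^2) \geq \log_2\bigl(1 + gP_p/(\sigma_p^2 + \hat{h}P_c^U)\bigr)$, the last inequality holding because $\hat{h}P_c^U \geq 0$. This gives $R_{\text{P-O}}(\tau,\{e_i\}) \geq R_\text{P-U}(\tau,\{e_i\}) = \bar{R}$, confirming $(\tau,\{e_i\}) \in \cF_O(\bar{R})$, and also $R_p \leq R_\text{P-U} \leq R_{\text{P-O}}$. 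For the CWPCN sum-throughput, I simply observe that $\hat{\gamma}_i = h_i/\sigma_c^2 \geq h_i/(\sigma_c^2 + \hat{g}P_p) = \gamma_i$ for all $i$, and since $R_{\text{C}}$ defined in \eqref{Eq:RateCWPCN} is monotonically nondecreasing in each $\delta_i$, we get $R_c \leq R_{\text{C}}(\tau, \{e_i\}, \{\gamma_i\}) \leq R_{\text{C}}(\tau, \{e_i\}, \{\hat{\gamma}_i\})$. Combining with the union over $\bar{R}\geq 0$ in \eqref{Eq:RateRegionO} then gives $(R_p,R_c) \in \cR_O$, which closes the inclusion.

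The proof is essentially a bookkeeping argument rather than a hard optimization; the only pitfall is the subtle change of the harvested-energy budget $Q_i$ between the two models, and the need to confirm that this change goes in the favorable direction so that the same $(\tau, \{e_i\})$ remains feasible. This is the main point I will emphasize, together with the two rate-monotonicity observations (the coherent-addition gain in $r_1$ and the interference cancellation gain $\hat{\gamma}_i \geq \gamma_i$), which together encapsulate exactly the two structural benefits of the overlay design over the underlay design.
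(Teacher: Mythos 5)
Your proof is correct and takes essentially the same route as the paper's: reuse the underlay-feasible $(\tau,\{e_i\})$ in the overlay region with a threshold $\bar{R}\leq R_{\text{P-O}}(\tau,\{e_i\})$, justified by $r_1 \geq \log_2\l(1+\frac{gP_p}{\sigma_p^2+\hat{h}P_c}\r)$ for the primary rate and $\gamma_i \leq \hat{\gamma}_i$ for the secondary sum-throughput. Your explicit check that the harvested-energy budget $Q_i$ only increases when $P_c$ changes from $\min\l(\Gamma/\hat{h},\Pmax\r)$ to $\Pmax$ is a detail the paper leaves implicit, but it does not alter the argument.
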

\begin{proof}
See Appendix~\ref{Proof:Proposition:Performance}.
\end{proof}

Proposition~\ref{Proposition:Performance} indicates that, for instance, with the same sum-throughput of the CWPCN, the maximum achievable rate of the primary user is in general larger in O-CWPCN than that in U-CWPCN. This is due to the fully cooperative WET/WIT design with the primary user's transmission in the O-CWPCN. However, it should be noted that the O-CWPCN requires additional CSI as well as primary user message knowledge for implementation. Table II thus compares the required CSI at each  of  the two CWPCNs in practice. It is worth pointing out that the O-CWPCN needs to know additionally the phase of the channels of $\hat{h}$ and $g$ for implementing the cooperative transmission at the H-AP's transmitter with the PT's WIT (see \eqref{Eq:PrimaryRate}), as well as that of $\hat{g}$ for the interference cancellation at the H-AP's receiver. 

In this paper, we assume that the perfect channel knowledge is available at the CWPCN as indicated in Table II to study the performance limit of the proposed CWPCN schemes. In practice, for both the U-CWPCN and O-CWPCN, the CSI on the channels within the CWPCN (i.e., the CSI of $h_i$'s) can be obtained by classic channel training, estimation and feedback mechanisms. For the U-CWPCN, the channel power gains between the PT and the CWPCN (i.e., $g_i$'s and $\hat{g}$) can be obtained by the CWPCN via estimating the received signal power from the PT provided that it knows \emph{a priori} the transmit power of the PT, while those between the CWPCN and the PR (i.e., $\hat{h}$ and $\hat{h}_i$'s) can be obtained by the CWPCN by similarly estimating the received signal power when the PR transmits and assuming the channel reciprocity. On the other hand, for the O-CWPCN, the additional CSI on the primary link (i.e., the CSI of $g$) can be obtained by the CWPCN via a dedicated feedback link from the PT, as explained in Section~\ref{Subsection:Overlay}.

\section{Sum-Throughput Maximization in Underlay Based CWPCN} \label{Section:Underlay}
In this section, we present the optimal solution to problem (P1), i.e., the sum-throughput maximization problem for the case of U-CWPCN. By fixing $\tau = \bar{\tau}$, $0<\bar{\tau}<1$,  (P1) reduces to the following optimization problem.
\begin{align}
\mathrm{(P1.1)}:~\mathop{\mathtt{Maximize}}_{\{e_i\}} &~~(1-\bar{\tau})\log_2\l( 1+ \frac{1}{1-\bar{\tau}} \sum_{i=1}^K \gamma_i e_i  \r) \nn\\
\mathtt{subject\; to}
&~~0\leq e_i \leq \bar{\tau} Q_i, \; i=1,...,K  \nn\\
&~~\sum_{i=1}^K \hat{h}_i e_i \leq (1-\bar{\tau})\Gamma. \label{P1.1Constraint2}
\end{align}
It can be seen that (P1.1) is also a convex optimization problem, similar to (P1). Hence, it must satisfy the Karush-Kuhn-Tacker (KKT) conditions \cite{B_BV:2004}, from which we can derive the optimal solution of (P1.1), denoted by $\{\bar{e}_i\}$, which is given in the following proposition.
\begin{proposition} \label{Theorem:Underlay}
Let the set of indices $\{(1),(2),...,(K)\}$ denote $\l\{\frac{\gamma_i}{\hat{h}_i}\r\}$, $i=1,...,K$, in decreasing order, i.e., $\frac{\gamma_{(1)}}{\hat{h}_{(1)}} > \frac{\gamma_{(2)}}{\hat{h}_{(2)}} > ... > \frac{\gamma_{(K)}}{\hat{h}_{(K)}}$. Then, the optimal solution to (P1.1) is given as follows.
\begin{enumerate}
\item We have  $\bar{e}_i = \bar{\tau} Q_i$, $i=1,...,K$, i.e., all the CUs transmit with full power, if and only if
\begin{equation} \label{Eq:tau_K+1}
0 <  \bar{\tau} \leq \frac{\Gamma}{\Gamma + \sum_{i=1}^K \hat{h}_i Q_i}.
\end{equation}

\item Otherwise, we have
\begin{align} \label{Eq:e_i_optimal}
&\bar{e}_{(i)} = \l\{\begin{aligned}
&\bar{\tau}Q_{(i)}, \quad \mbox{if } i=1,...,k-1, \\
&\frac{1}{\hat{h}_{(i)}}\l(\Gamma(1-\bar{\tau}) - \bar{\tau}\sum_{i=1}^{k-1}\hat{h}_{(i)}Q_{(i)}\r), \; \mbox{if } i=k, \\
&0, \quad \mbox{if } i=k+1,...,K,
\end{aligned}
\r.
\end{align}
for a given $k=1,...,K$, i.e., at most one CU transmits with fractional power, and the others either transmit with full power or do not transmit, where $k$ satisfies
\begin{equation} \label{Eq:tau_k}
\frac{\Gamma}{\Gamma + \sum_{i=1}^{k} \hat{h}_{(i)} Q_{(i)}} < \bar{\tau} \leq \frac{\Gamma}{\Gamma + \sum_{i=1}^{k-1} \hat{h}_{(i)} Q_{(i)}}.
\end{equation}
\end{enumerate}
\end{proposition}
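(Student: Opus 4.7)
The plan is to exploit two structural observations about (P1.1). First, since $\log_2(\cdot)$ is strictly increasing, maximizing the objective is equivalent to maximizing the linear functional $\sum_{i=1}^K \gamma_i e_i$ subject to the same constraints. Thus (P1.1) reduces to a fractional-knapsack-type linear program in which item $i$ carries value $\gamma_i$, weight $\hat{h}_i$, and individual capacity $\bar{\tau} Q_i$, under the aggregate weight budget $(1-\bar{\tau})\Gamma$. Second, a standard exchange argument applies: if $\gamma_i/\hat{h}_i > \gamma_j/\hat{h}_j$ and some feasible point has $e_j > 0$ together with $e_i < \bar{\tau} Q_i$, then decreasing $e_j$ by $\delta$ while increasing $e_i$ by $\delta\hat{h}_j/\hat{h}_i$ (for sufficiently small $\delta>0$) keeps $\sum_i \hat{h}_i e_i$ unchanged and strictly increases $\sum_i \gamma_i e_i$. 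Consequently, at any optimum the CUs must be saturated in the $\gamma_{(\cdot)}/\hat{h}_{(\cdot)}$-ordering of the statement.

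Given this greedy structure, the remainder is a clean case split on whether the ITC \eqref{P1.1Constraint2} binds. If it does not bind even when every CU transmits at full power, then the unconstrained-by-ITC maximum $\bar{e}_i = \bar{\tau} Q_i$ for all $i$ is feasible and therefore optimal; translating the slack condition $\bar{\tau}\sum_i \hat{h}_i Q_i \le (1-\bar{\tau})\Gamma$ into a bound on $\bar{\tau}$ yields exactly \eqref{Eq:tau_K+1}. Otherwise, the ITC must be tight at the optimum, and the greedy rule fills users $(1),(2),\ldots$ at their upper bounds until the residual budget cannot accommodate the next full user. Letting $k$ denote the critical index, the first $k-1$ users take $\bar{\tau}Q_{(i)}$, user $k$ absorbs the remaining budget $[(1-\bar{\tau})\Gamma - \bar{\tau}\sum_{i=1}^{k-1}\hat{h}_{(i)}Q_{(i)}]/\hat{h}_{(k)}$, and the rest are zero --- precisely \eqref{Eq:e_i_optimal}. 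Comparing cumulative weights after $k-1$ and $k$ full saturations against $(1-\bar{\tau})\Gamma$ then pins down the two-sided bound on $\bar{\tau}$ in \eqref{Eq:tau_k}.

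An equivalent derivation, matching the hint in the statement, goes through KKT. Stationarity reduces to $\beta \gamma_i = \mu \hat{h}_i + \lambda_i - \nu_i$, where $\mu, \lambda_i, \nu_i \ge 0$ are the multipliers for the ITC and the box constraints and $\beta > 0$ is an index-independent scalar arising from the derivative of the log term. Complementary slackness then forces $e_i = \bar{\tau} Q_i$ whenever $\gamma_i/\hat{h}_i > \mu/\beta$, $e_i = 0$ whenever $\gamma_i/\hat{h}_i < \mu/\beta$, and $0 < e_i < \bar{\tau} Q_i$ only for the at-most-one index satisfying $\gamma_i/\hat{h}_i = \mu/\beta$; primal feasibility then identifies $\mu/\beta$ with $\gamma_{(k)}/\hat{h}_{(k)}$ for the $k$ in \eqref{Eq:tau_k}. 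The main obstacle I expect is purely bookkeeping at the threshold values of $\bar{\tau}$ in \eqref{Eq:tau_K+1} and \eqref{Eq:tau_k}, where the fractional user either saturates or empties and the partition into full/partial/zero groups shifts by one index; once the greedy ordering is in place these boundary cases are unambiguous, but the open/closed endpoints in the stated inequalities deserve an explicit check for consistency.
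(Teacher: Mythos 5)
Your argument is correct, and its main route differs from the paper's. The paper also begins with the same reduction to the linear program (its (P1.1-a)), but then proceeds entirely through the Lagrangian/KKT machinery: it splits on whether the multiplier of the ITC \eqref{P1.1Constraint2} is zero or positive, rules out two users transmitting with fractional power by noting that $\gamma_m/\hat{h}_m = \gamma_n/\hat{h}_n$ is a zero-probability event for continuous channel gains, and then reads off the full/zero split from complementary slackness, obtaining \eqref{Eq:e_i_optimal} and the range \eqref{Eq:tau_k} from tightness of the ITC and $0<\bar{e}_{(k)}<\bar{\tau}Q_{(k)}$. You instead use a direct fractional-knapsack exchange argument: it yields the same greedy saturation structure, and it delivers the ``at most one fractional user'' conclusion deterministically from the strict ordering $\frac{\gamma_{(1)}}{\hat{h}_{(1)}}>\cdots>\frac{\gamma_{(K)}}{\hat{h}_{(K)}}$ assumed in the statement, rather than via the paper's probabilistic aside; it is also more elementary, needing no dual variables at all (your KKT sketch, which essentially reproduces the paper's argument with the harmless extra factor $\beta>0$ from differentiating the log, is redundant but consistent). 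Two small points worth making explicit in a polished write-up: (i) your assertion that ``otherwise the ITC must be tight at the optimum'' follows in one line because every $\gamma_i>0$ makes the objective strictly increasing in each $e_i$, so a slack ITC with some $e_i<\bar{\tau}Q_i$ contradicts optimality; (ii) the endpoint bookkeeping you flag resolves exactly as in the paper, since at $\bar{\tau}=\frac{\Gamma}{\Gamma+\sum_{i=1}^{k-1}\hat{h}_{(i)}Q_{(i)}}$ the formula in \eqref{Eq:e_i_optimal} gives $\bar{e}_{(k)}=0$, so the half-open intervals in \eqref{Eq:tau_K+1} and \eqref{Eq:tau_k} are consistent with the greedy fill.
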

\begin{proof}
See Appendix~\ref{Proof:Underlay}.
\end{proof}

\begin{figure} 
\centering
\includegraphics[width=8.5cm]{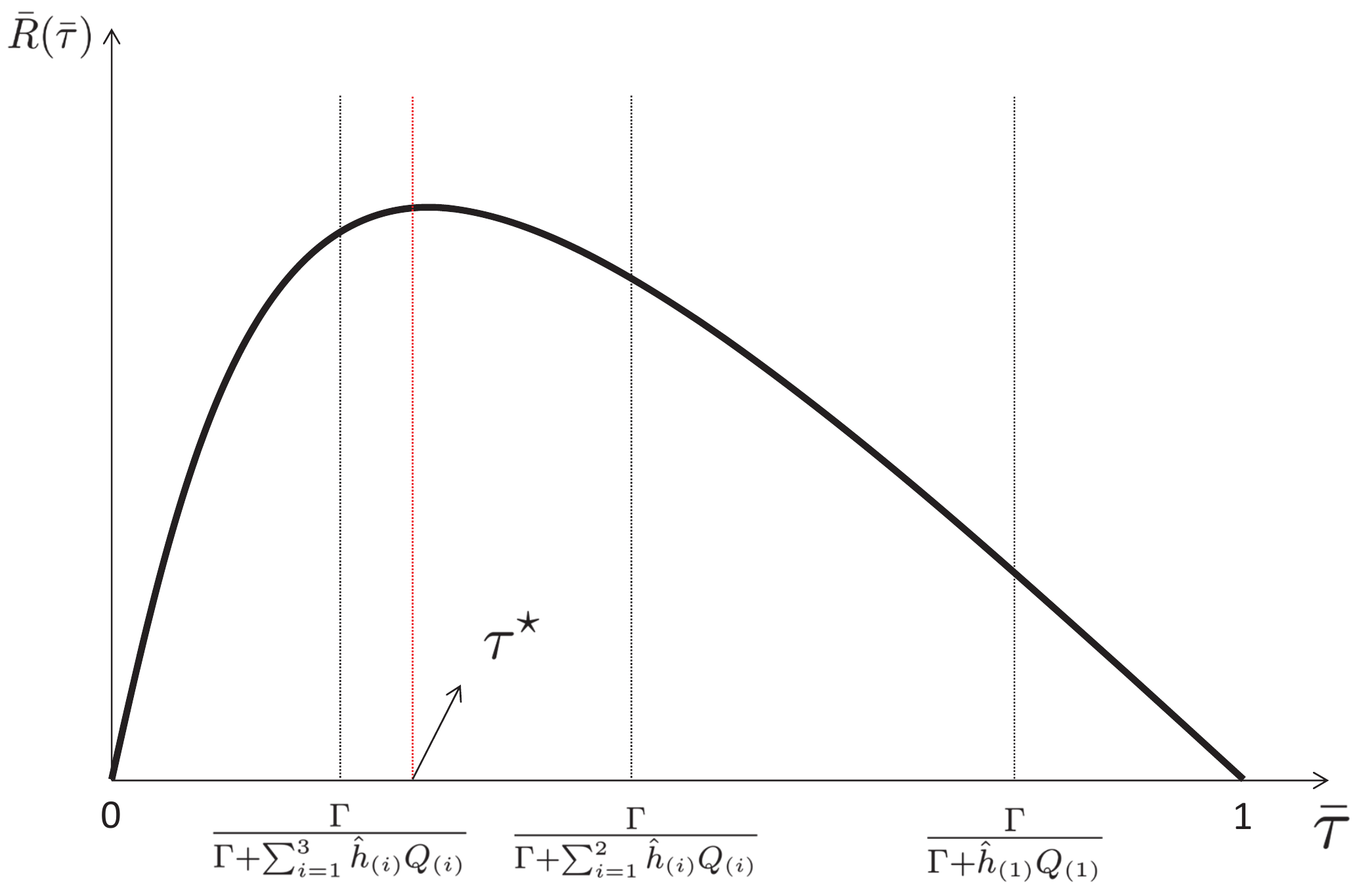}
\caption{Illustration of $\bar{R}(\bar{\tau})$. } \label{Fig:Tau}
\end{figure}

Thus, we can solve (P1.1) for any given $0<\bar{\tau}<1$. Now, let $\bar{R}(\bar{\tau})$ denote the optimal value of (P1.1) with a given $0<\bar{\tau}<1$. By plugging $\bar{e}_i$'s obtained in Proposition~\ref{Theorem:Underlay} into the objective function of (P1.1), $\bar{R}(\bar{\tau})$ can be expressed as 
\begin{align} \label{Eq:f_k}
\bar{R}(\bar{\tau}) =  \l\{\begin{aligned}
& (1-\bar{\tau}) \log_2\l(1+\frac{\bar{\tau}}{1-\bar{\tau}}\sum_{i=1}^K \gamma_i Q_i\r), \\ 
& \mbox{if } 0 <  \bar{\tau} \leq \frac{\Gamma}{\Gamma + \sum_{i=1}^K \hat{h}_i Q_i},  \\
& (1-\bar{\tau}) \log_2\bigg(1+\frac{\bar{\tau}}{1-\bar{\tau}}\sum_{i=1}^{k-1}\gamma_{(i)}Q_{(i)}  \\ 
& \quad + \frac{1}{1-\bar{\tau}}\frac{\gamma_{(k)}}{\hat{h}_{(k)}}\l[\Gamma(1-\bar{\tau}) - \bar{\tau}\sum_{i=1}^{k-1}\hat{h}_{(i)}Q_{(i)}\r]\bigg),\\ 
&\mbox{if } \frac{\Gamma}{\Gamma + \sum_{i=1}^{k} \hat{h}_{(i)} Q_{(i)}} < \bar{\tau} \leq \frac{\Gamma}{\Gamma + \sum_{i=1}^{k-1} \hat{h}_{(i)} Q_{(i)}}, \\&\qquad\qquad\qquad\qquad\qquad\qquad\qquad k=1,...,K.
\end{aligned}
\r.
\end{align}
It then follows that we can solve (P1) optimally by maximizing $\bar{R}(\bar{\tau})$ over $0<\bar{\tau}< 1$, i.e.,
\begin{equation} \label{Eq:P1_tau}
\tau^\star = \arg\max_{0<\bar{\tau}<1} \bar{R}(\bar{\tau}),
\end{equation}
where $\tau^\star$ denotes the optimal $\tau$ of (P1). It can be shown that $\bar{R}(\bar{\tau})$ is a concave function over $0<\bar{\tau}<1$, as illustrated in Fig.~\ref{Fig:Tau} for the case of $K=3$, and thus \eqref{Eq:P1_tau} can be efficiently solved by e.g., Newton's method. With the obtained $\tau^\star$, the resulting optimal $\{e_i\}$ of (P1), denoted by $\{e_i^\star\}$, can be directly obtained from Proposition~\ref{Theorem:Underlay} with $\bar{\tau}=\tau^\star$. Problem (P1) is thus solved.

\section{Sum-Throughput Maximization in Overlay Based CWPCN} \label{Section:Overlay}
In this section, we solve problem (P2), i.e., the sum-throughput maximization for the case of O-CWPCN. As mentioned in Section~\ref{Subsection:Overlay}, problem (P2) is non-convex due to the PRC in \eqref{Eq:P2Constraint3}. Nonetheless, in the following we show that the problem can be solved optimally by using the technique so-called \emph{active interference-temperature control} \cite{J_ZLC:2010}. The idea is mainly from the observation that, at the optimal solution of (P2), there must exist an unique interference-temperature power level at the PR. As will be shown later in this section, (P2) can thus be solved via solving a sequence of time/power allocation problems in the O-CWPCN each subject to a given ITC and iteratively searching for the optimal ITC value to maximize the sum-throughput, where each subproblem can be solved with a similar solution to that of (P1) in the case of U-CWPCN in Section~\ref{Section:Underlay}.

The details are given as follows. For the purpose of exposition, we consider the following problem (P2-a) which has the same objective function with (P2).
\begin{align}
\text{(P2-a)}:\nn\\
~\mathop{\mathtt{Maximize}}_{\tau,\{e_i\}} &~~  (1-\tau) \log_2\l(1+\frac{1}{1-\tau}\sum_{i=1}^K \hat{\gamma}_i e_i\r) \nn  \\
\mathtt{subject\; to}
&~~0\leq \tau \leq 1 \label{Eq:P2.1Constraint1} \\
&~~0\leq e_i \leq \tau Q_i, \; i=1,...,K  \label{Eq:P2.1Constraint2}\\
&~~  \tau r_1 + (1-\tau) \log_2\l(1+\frac{gP_p}{\sigma_p^2 + \Gamma_0}\r) \geq \bar{R} \label{Eq:P2.1Constraint3} \\
&~~ \frac{1}{1-\tau}\sum_{i=1}^K \hat{h}_i e_i \leq \Gamma_0, \label{Eq:P2.1Constraint4}
\end{align}
where $\Gamma_0 \geq 0$. Let $R_\text{a}^\star(\Gamma_0)$ denote the optimal value of (P2-a) with a given $\Gamma_0\geq 0$. Furthermore, define
\begin{equation*}
\tilde{\Gamma}_0 = \arg\max_{\Gamma_0\geq 0} R_\text{a}^\star(\Gamma_0),
\end{equation*}
i.e., $R_\text{a}^\star(\tilde{\Gamma}_0)\geq R_\text{a}^\star(\Gamma_0)$, $\forall \Gamma_0\geq 0$. The following proposition shows that the optimal values of (P2-a) and (P2) become same when $\Gamma_0 = \tilde{\Gamma}_0$.

\begin{proposition} \label{Proposition:ActiveInterference}
Let  $R^\star$ denote the optimal value of (P2). It then follows that $R^\star = R_\text{a}^\star(\tilde{\Gamma}_0)$. 
\end{proposition}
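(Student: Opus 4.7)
The plan is to prove Proposition~\ref{Proposition:ActiveInterference} by establishing the two inequalities $R_\text{a}^\star(\tilde{\Gamma}_0)\leq R^\star$ and $R_\text{a}^\star(\tilde{\Gamma}_0)\geq R^\star$ separately, using the monotonicity of the primary rate expression in \eqref{Eq:PrimaryRate} with respect to the aggregate interference from the CUs. The key observation, which drives the entire argument, is that the function $x\mapsto \log_2\!\bigl(1+\tfrac{gP_p}{\sigma_p^2+x}\bigr)$ is strictly decreasing in $x\geq 0$, so that replacing the actual interference $\tfrac{1}{1-\tau}\sum_{i}\hat{h}_i e_i$ by any upper bound $\Gamma_0$ yields a lower value of the primary rate expression.

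For the first direction, I would fix an arbitrary $\Gamma_0\geq 0$ and pick any feasible $(\tau,\{e_i\})$ for (P2-a). Constraint \eqref{Eq:P2.1Constraint4} then gives $\tfrac{1}{1-\tau}\sum_i \hat{h}_i e_i \leq \Gamma_0$, so the monotonicity observation above implies
\begin{align*}
R_{\text{P-O}}(\tau,\{e_i\}) &\geq \tau r_1+(1-\tau)\log_2\!\Bigl(1+\tfrac{gP_p}{\sigma_p^2+\Gamma_0}\Bigr) \geq \bar R,
\end{align*}
where the last inequality is \eqref{Eq:P2.1Constraint3}. Combined with \eqref{Eq:P2.1Constraint1}--\eqref{Eq:P2.1Constraint2}, this shows $(\tau,\{e_i\})\in \cF_O(\bar R)$, i.e., the feasible set of (P2-a) is contained in that of (P2). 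Since the two problems share the same objective, $R_\text{a}^\star(\Gamma_0)\leq R^\star$ for every $\Gamma_0\geq 0$, and in particular $R_\text{a}^\star(\tilde{\Gamma}_0)\leq R^\star$.

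For the reverse direction, I would take an optimal solution $(\tau^\star,\{e_i^\star\})$ of (P2) and choose the specific value $\Gamma_0^\star \bydef \tfrac{1}{1-\tau^\star}\sum_i \hat{h}_i e_i^\star$ (with the natural convention when $\tau^\star=1$, where both sides of \eqref{Eq:P2.1Constraint4} can be made to vanish). With this choice, \eqref{Eq:P2.1Constraint4} is satisfied with equality, and \eqref{Eq:P2.1Constraint3} reduces exactly to the PRC $R_{\text{P-O}}(\tau^\star,\{e_i^\star\})\geq\bar R$, which holds by feasibility in (P2). Hence $(\tau^\star,\{e_i^\star\})$ is feasible for (P2-a) with $\Gamma_0=\Gamma_0^\star$ and attains the objective value $R^\star$, giving $R_\text{a}^\star(\Gamma_0^\star)\geq R^\star$. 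Taking the maximum over $\Gamma_0$ yields $R_\text{a}^\star(\tilde{\Gamma}_0)\geq R_\text{a}^\star(\Gamma_0^\star)\geq R^\star$, and combining with the previous direction finishes the proof.

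The argument is essentially a reformulation trick rather than a deep inequality, so I do not expect a genuine obstacle. The only subtlety worth flagging is a careful treatment of boundary cases: in particular, the endpoint $\tau=1$ (where the WIT phase vanishes and \eqref{Eq:P2.1Constraint4} is degenerate) and the possibility that the PRC is inactive at the optimum (in which case $\Gamma_0^\star$ can be taken arbitrarily large and the equivalence still holds). Once these are handled by continuity/convention, the proof reduces to the two monotonicity-based inclusions above.
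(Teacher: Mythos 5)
Your proposal is correct and follows essentially the same route as the paper's proof: first showing the feasible set of (P2-a) is contained in that of (P2) for every $\Gamma_0\geq 0$ (you simply make explicit the monotonicity of $x\mapsto\log_2\bigl(1+\tfrac{gP_p}{\sigma_p^2+x}\bigr)$ that underlies this inclusion), and then plugging the optimal solution of (P2) into (P2-a) with $\Gamma_0$ equal to its resulting interference level at the PR to get the reverse inequality. Your extra remarks on the boundary case $\tau=1$ are a harmless refinement, not a change of method.
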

\begin{proof}
See Appendix~\ref{Proof:Proposition:ActiveInterference}.
\end{proof}

In other words, (P2-a) becomes equivalent to (P2) when $\Gamma_0 = \tilde{\Gamma}_0$. Thus, (P2) can be solved by first solving (P2-a) with any given $\Gamma_0 \geq 0$, and then with obtained $R_\text{a}^\star(\Gamma_0)$, solving
\begin{equation} \label{Eq:OverlayGamma}
R^\star = \max_{\Gamma_0 \geq 0} R_\text{a}^\star(\Gamma_0).
\end{equation}
Thus, in the rest of this section, we present the solution to (P2-a) as well as how to efficiently search for $\Gamma_0 \geq 0$ that maximizes $R_\text{a}^\star(\Gamma_0)$ to solve \eqref{Eq:OverlayGamma}.

First, to solve (P2-a), we show the connection between (P2-a) and (P1) which has been solved in Section~\ref{Section:Underlay}. Note that the constraint \eqref{Eq:P2.1Constraint3} in (P2-a) can be simplified as $\tau \geq \frac{\bar{R}-r_2(\Gamma_0)}{r_1-r_2(\Gamma_0)}$, where $r_2(\Gamma_0) \triangleq \log_2\l(1+\frac{gP_p}{\sigma_p^2 + \Gamma_0}\r)$ for notational convenience. As a result, the constraints \eqref{Eq:P2.1Constraint1} and \eqref{Eq:P2.1Constraint3} can be combined as 
\begin{equation*}
\l[\frac{\bar{R}-r_2(\Gamma_0)}{r_1-r_2(\Gamma_0)}\r]^+ \leq \tau \leq 1,
\end{equation*}
where $[a]^+ \triangleq \max(0,a)$. Since $\l[\frac{\bar{R}-r_2(\Gamma_0)}{r_1-r_2(\Gamma_0)}\r]^+\geq 0$, it follows that (P2-a) becomes a special case of (P1) with $\gamma_i$ and $\Gamma $ in (P1) replaced by $\hat{\gamma}_i$ and $\Gamma = \Gamma_0$, respectively. Therefore, the optimal solution of (P2-a) with a given $\Gamma_0\geq 0$, denoted by  $(\tau(\Gamma_0), \{e_i(\Gamma_0)\})$, can be obtained by slightly modifying the solution of (P1), explained as follows. Let $\hat{\tau}^\star$ denote the optimal solution of (P1) with $\gamma_i$ replaced by $\hat{\gamma}_i$ and $\Gamma = \Gamma_0$. We  have seen in Section~\ref{Section:Underlay} that $\bar{R}(\bar{\tau})$ given in \eqref{Eq:f_k} is concave over $0\leq \bar{\tau}\leq 1$ and thus the optimal $\bar{\tau}$ can be efficiently found by e.g., Newton's method (see Fig.~\ref{Fig:Tau}), based on which $\hat{\tau}^\star$ can be found similarly. Thus, it can be easily verified that
\begin{equation} \label{Eq:tau_Gamma}
\tau(\Gamma_0) = \l\{\begin{aligned}
&\hat{\tau}^\star,  \quad\quad\quad\quad\quad\text{if }\; \frac{\bar{R}-r_2(\Gamma_0)}{r_1-r_2(\Gamma_0)} < \hat{\tau}^\star \\
& \frac{\bar{R}-r_2(\Gamma_0)}{r_1 - r_2(\Gamma_0)},  \quad\text{otherwise},
\end{aligned}
\r.
\end{equation}
\begin{align*}
&e_{(i)}(\Gamma_0) = \l\{\begin{aligned}
&\tau(\Gamma_0) Q_{(i)}, \quad \mbox{if } i=1,...,k-1, \\
&\frac{1}{\hat{h}_{(i)}}\l(\Gamma_0(1-\tau(\Gamma_0)) - \tau(\Gamma_0)\sum_{i=1}^{k-1}\hat{h}_{(i)}Q_{(i)}\r), \nn\\ &\qquad\qquad\qquad\mbox{if } i=k, \\
&0, \quad \mbox{if } i=k+1,...,K. \\
\end{aligned}
\r.
\end{align*}
Problem (P2-a) is thus solved. 

Next, we discuss how to reduce the feasible set of $\Gamma_0$ to search for solving \eqref{Eq:OverlayGamma}, based on the solution of (P2-a). We have seen in Proposition~\ref{Proposition:ActiveInterference} that with $\Gamma_0 = \tilde{\Gamma}_0$, (P2-a) becomes equivalent to (P2), and it can be inferred that with the corresponding optimal solution of (P2-a), i.e., $(\tau(\tilde{\Gamma}_0), \{e_i(\tilde{\Gamma}_0)\})$, the interference constraint \eqref{Eq:P2.1Constraint4} should hold with equality, i.e., 
\begin{equation*}
\frac{1}{1-\tau(\tilde{\Gamma}_0)}\sum_{i=1}^K \hat{h}_i e_i(\tilde{\Gamma}_0) = \tilde{\Gamma}_0.
\end{equation*}
From the above, we observe that if we are only interested in the values of $\Gamma_0$ that could be the optimal solution of (P2), it is sufficient to search over the subset of $\{\Gamma_0:\Gamma_0\geq 0\}$ that makes the interference constraint \eqref{Eq:P2.1Constraint4}  tight. The following proposition then gives a set of $\Gamma_0$ that is of interest.
\begin{proposition} \label{Proposition:Overlay}
With the optimal solution of (P2-a), the constraint \eqref{Eq:P2.1Constraint4} is tight if and only if
\begin{align} \label{Eq:GammaTightSet}
&\Gamma_0 \in \l\{\Gamma_0 \geq 0: \tilde{\tau}_K\geq \frac{\Gamma_0}{\Gamma_0 + \sum_{i=1}^K \hat{h}_i Q_i}\r\}  \nn\\ 
&\cup\l\{\Gamma_0 \geq 0: \frac{\bar{R}-r_2(\Gamma_0)}{r_1 - r_2(\Gamma_0)}\geq \frac{\Gamma_0}{\Gamma_0 + \sum_{i=1}^K \hat{h}_i Q_i} \r\} \triangleq \cF_{\Gamma_0},
\end{align}
where $\tilde{\tau}_K$ is defined as the solution of the following equation 
\begin{equation*}
\frac{\partial}{\partial \tau} \l[(1-\tau) \log_2\l(1+\frac{\tau}{1-\tau}\sum_{i=1}^K  \hat{\gamma}_i Q_i\r) \r]= 0.
\end{equation*}
\end{proposition}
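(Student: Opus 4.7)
My plan is to reduce the claim to a direct check on the optimal $\tau(\Gamma_0)$ given in \eqref{Eq:tau_Gamma}, using the full-power threshold $T(\Gamma_0):=\Gamma_0/(\Gamma_0+\sum_i\hat h_i Q_i)$ identified in Proposition~\ref{Theorem:Underlay}. Recall that (P2-a) with a fixed $\Gamma_0$ was just shown to be (P1) with $\gamma_i$ replaced by $\hat\gamma_i$, $\Gamma$ by $\Gamma_0$, and an extra lower bound $\tau\geq(\bar R-r_2(\Gamma_0))/(r_1-r_2(\Gamma_0))$ on the time split. Consequently the optimal time allocation is $\tau(\Gamma_0)=\max\{\hat\tau^\star,(\bar R-r_2(\Gamma_0))/(r_1-r_2(\Gamma_0))\}$, and the CU energies $\{e_i(\Gamma_0)\}$ inherit the two-regime form of \eqref{Eq:e_i_optimal}.

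First I would observe, via the dichotomy in Proposition~\ref{Theorem:Underlay}, that constraint \eqref{Eq:P2.1Constraint4} is slack whenever $\tau<T(\Gamma_0)$ (since then $e_i=\tau Q_i$ and $\sum_i\hat h_i e_i=\tau\sum_i\hat h_iQ_i<(1-\tau)\Gamma_0$) and holds with equality whenever $\tau\geq T(\Gamma_0)$ (by construction of the fractional-power line in \eqref{Eq:e_i_optimal}). Hence \eqref{Eq:P2.1Constraint4} is tight at the optimum of (P2-a) if and only if $\tau(\Gamma_0)\geq T(\Gamma_0)$. Combining with the max form of $\tau(\Gamma_0)$, this is equivalent to the disjunction $\hat\tau^\star\geq T(\Gamma_0)$ or $(\bar R-r_2(\Gamma_0))/(r_1-r_2(\Gamma_0))\geq T(\Gamma_0)$, and the second alternative is precisely the second set in the definition of $\cF_{\Gamma_0}$ in \eqref{Eq:GammaTightSet}.

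It then remains to establish the equivalence $\hat\tau^\star\geq T(\Gamma_0)\iff \tilde\tau_K\geq T(\Gamma_0)$. From \eqref{Eq:f_k} with $\hat\gamma_i$ in place of $\gamma_i$, $\bar R(\bar\tau)$ is concave on $(0,1)$ and on its first piece $(0,T(\Gamma_0)]$ it coincides with $(1-\bar\tau)\log_2(1+\tfrac{\bar\tau}{1-\bar\tau}\sum_i\hat\gamma_iQ_i)$, whose unconstrained maximizer is $\tilde\tau_K$. Thus if $\tilde\tau_K<T(\Gamma_0)$ then $\hat\tau^\star=\tilde\tau_K<T(\Gamma_0)$; if instead $\tilde\tau_K\geq T(\Gamma_0)$ the first-piece objective is nondecreasing on the whole first piece, so concavity of $\bar R$ forces the global maximizer into a later piece, yielding $\hat\tau^\star\geq T(\Gamma_0)$. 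This delivers the first set in $\cF_{\Gamma_0}$, and together with the preceding paragraph completes the if-and-only-if claim.

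The main obstacle is this last equivalence, because it rests on the piecewise-concave structure of $\bar R$ across regime changes (rather than strict concavity on a single smooth domain), as illustrated in Fig.~\ref{Fig:Tau}. The boundary case $\tilde\tau_K=T(\Gamma_0)$ is reconciled by noting that $\bar\tau=T(\Gamma_0)$ sits simultaneously at the end of the full-power regime and on the $k=1$ line of \eqref{Eq:e_i_optimal} with the ITC already holding at equality, so both regimes and both sides of the equivalence agree there.
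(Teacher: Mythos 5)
Your proof is correct and follows essentially the same route as the paper: reduce tightness of \eqref{Eq:P2.1Constraint4} to $\tau(\Gamma_0)\geq \Gamma_0/(\Gamma_0+\sum_{i=1}^K\hat{h}_iQ_i)$ via the structure in Proposition~\ref{Theorem:Underlay}, and then split according to which term attains the maximum in \eqref{Eq:tau_Gamma}. The only addition is that you explicitly justify the equivalence $\hat{\tau}^\star\geq \Gamma_0/(\Gamma_0+\sum_{i=1}^K\hat{h}_iQ_i)\iff\tilde{\tau}_K\geq \Gamma_0/(\Gamma_0+\sum_{i=1}^K\hat{h}_iQ_i)$ (the paper's Corollary~\ref{Corollary:Underlay}, stated there without proof) through the piecewise form of $\bar{R}(\bar{\tau})$ in \eqref{Eq:f_k} and its concavity, which is a valid and welcome filling-in of that step.
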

\begin{proof}
See Appendix~\ref{Proof:Proposition:Overlay}.
\end{proof}

Proposition~\ref{Proposition:Overlay} states that $\Gamma_0\in \cF_{\Gamma_0}$ if and only if $\frac{1}{1-\tau(\Gamma_0)}\sum_{i=1}^K \hat{h}_i e_i(\Gamma_0) = \Gamma_0$.

\begin{figure} 
\centering
\includegraphics[width=8.5cm]{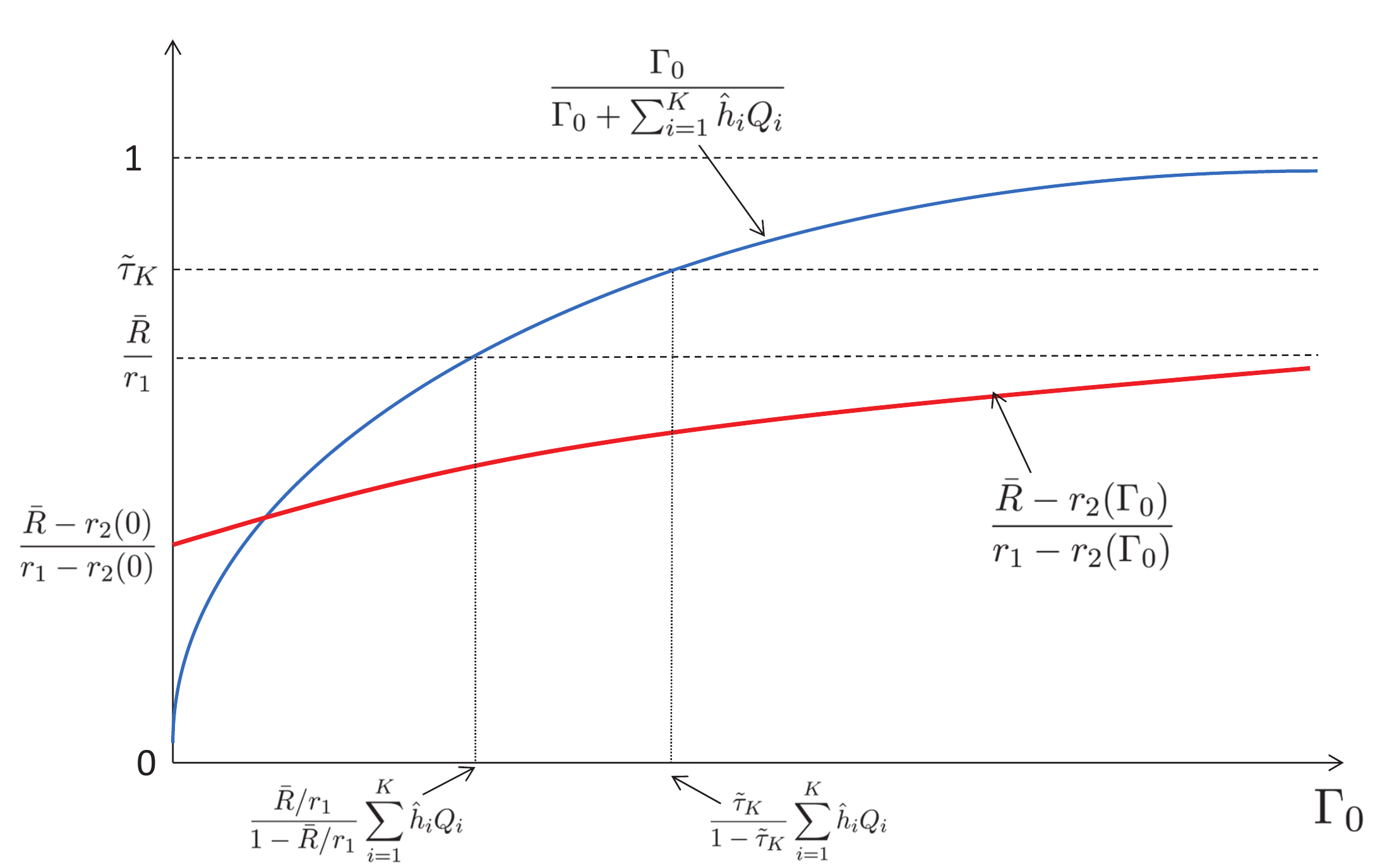}
\caption{Illustration of $\frac{\bar{R} - r_2(\Gamma_0)}{r_1 - r_2(\Gamma_0)}$ and $\frac{\Gamma_0}{\Gamma_0 + \sum_{i=1}^K \hat{h}_i Q_i}$ with respect to $\Gamma_0$. } \label{Fig:Gamma}
\end{figure}

Furthermore, we show that there is an upper bound of $\Gamma_0$ below which $\cF_{\Gamma_0}$ is included. First, it is noted that both $\frac{\bar{R}-r_2(\Gamma_0)}{r_1 - r_2(\Gamma_0)}$ and $\frac{\Gamma_0}{\Gamma_0 + \sum_{i=1}^K \hat{h}_i Q_i}$ are monotonically increasing functions of $\Gamma_0 \geq 0$. Furthermore, as $\Gamma_0 \rightarrow \infty$, it follows that  $\frac{\bar{R} - r_2(\Gamma_0)}{r_1 - r_2(\Gamma_0)} \rightarrow \frac{\bar{R}}{r_1}$ and  $\frac{\Gamma_0}{\Gamma_0 + \sum_{i=1}^K \hat{h}_i Q_i} \rightarrow 1$. Note that $\frac{\bar{R}}{r_1}\leq 1$ must hold, since \eqref{Eq:P2.1Constraint4} cannot be satisfied otherwise. As a result, the inequality $\frac{\bar{R}-r_2(\Gamma_0)}{r_1 - r_2(\Gamma_0)} < \frac{\Gamma_0}{\Gamma_0 + \sum_{i=1}^K \hat{h}_i Q_i}$ always holds when $\frac{\Gamma_0}{\Gamma_0 + \sum_{i=1}^K \hat{h}_i Q_i} \geq \frac{\bar{R}}{r_1}$, i.e., when $\Gamma_0 \geq \frac{\bar{R}/r_1}{1-\bar{R}/r_1}\sum_{i=1}^K \hat{h}_i Q_i$ (see Fig.~\ref{Fig:Gamma} for an illustration). Consequently, it can be easily verified from Proposition~\ref{Proposition:Overlay} that, if $\Gamma_0 \in \cF_{\Gamma_0}$, then we have
\begin{align*}
\Gamma_0 &\in \l\{\Gamma_0 \geq 0: \Gamma_0\leq  \frac{\tilde{\tau}_K}{1- \tilde{\tau}_K}\sum_{i=1}^K \hat{h}_i Q_i\r\}  \nn\\
&\qquad\qquad \cup\l\{\Gamma_0 \geq 0: \Gamma_0 \leq \frac{\bar{R}/r_1}{1-\bar{R}/r_1}\sum_{i=1}^K \hat{h}_i Q_i\r\}   \\  
&=\bigg\{\Gamma_0 \geq 0: \Gamma_0 \leq \nn\\
& \qquad\max\l(\frac{ \tilde{\tau}_K}{1- \tilde{\tau}_K}\sum_{i=1}^K \hat{h}_i Q_i , \; \frac{\bar{R}/r_1}{1-\bar{R}/r_1}\sum_{i=1}^K \hat{h}_i Q_i\r)\bigg\}\\
& \triangleq \tilde{\cF}_{\Gamma_0}.
\end{align*}
In other words, $\cF_{\Gamma_0} \subseteq \tilde{\cF}_{\Gamma_0}$. Accordingly, problem \eqref{Eq:OverlayGamma} can reduce to
\begin{equation*} 
R^\star = \max_{\Gamma_0 \in \tilde{\cF}_{\Gamma_0}} R_\text{a}^\star(\Gamma_0),
\end{equation*}
where an example of $R_\text{a}^\star(\Gamma_0)$ is illustrated in Fig.~\ref{Fig:Gamma_vs_Rc} for the case of $K=5$. As can be seen from Fig.~\ref{Fig:Gamma_vs_Rc}, $R_\text{a}^\star(\Gamma_0)$ achieves its maximum when $\Gamma_0 = \tilde{\Gamma}_0$, and $\tilde{\Gamma}_0$ belongs to $\tilde{\cF}_{\Gamma_0}$, i.e., $\tilde{\Gamma}_0\in \tilde{\cF}_{\Gamma_0}$.

\begin{figure} 
\centering
\includegraphics[width=7cm]{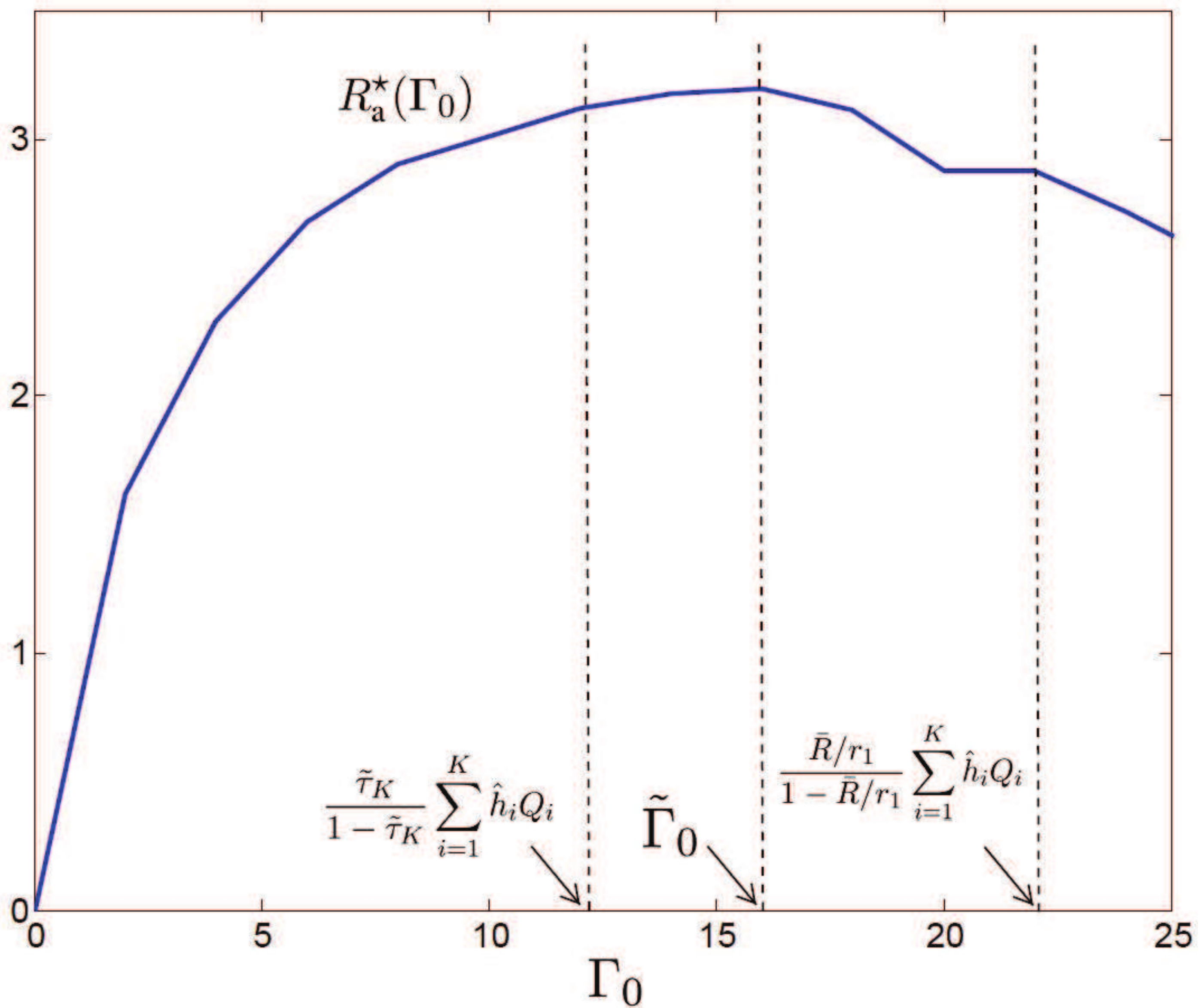}
\caption{Illustration of $R_\text{a}^\star(\Gamma_0)$, which is in general not a concave function. } \label{Fig:Gamma_vs_Rc}
\end{figure}

To summarize, (P2) can be solved optimally by iteratively solving the convex problem (P2-a) based on a similar solution of (P1), and finding $\tilde{\Gamma}_0$ via a simple one-dimensional search over $\Gamma_0 \in \tilde{\cF}_{\Gamma_0}$. Problem (P2) is thus solved. 

\section{Simulation Results} \label{Section:Simulation}

In this section, we evaluate the performance of the proposed CWPCN models by simulation. For the simulation, we set the number of CUs as $K=5$, the receiver energy harvesting efficiency as $\eta_i = 0.8$, $\forall i$, and the noise power as $\sigma^2_p = \sigma_c^2 = -90$ dBm (which corresponds to the noise power spectral density of $-160$ dBm and transmission bandwidth of $10$ MHz). As shown in Fig.~\ref{Fig:SimSetup}, we assume one-dimensional user locations for simplicity, where the PT and PR are located $200$ meters (m) apart. The H-AP and CUs are on the line segment connecting the PT and the PR, where two CUs are at the left side of the H-AP with distances of $4$m and $5$m, respectively, while the other three CUs are at the right side of the H-AP with distances of $5$m, $10$m, and $15$m, respectively. For the purpose of exposition, we compare the following three cases with different  H-AP locations: In the first case, referred to as Case 1, the H-AP is located at the middle of the PT and PR as shown in Fig.~\ref{Fig:Middle}; in the second case, referred to as Case 2, the H-AP is located closer to the PR than PT (with a distance of $170$m from the PT) as shown in Fig.~\ref{Fig:PR}; while in the third case, referred to as Case 3, the H-AP is located closer to the PT (with a distance of $30$m from the PT) as shown in Fig.~\ref{Fig:PT}.

\begin{figure}
\centering
\subfigure[Case 1]{
\centering
\includegraphics[width=8.5cm]{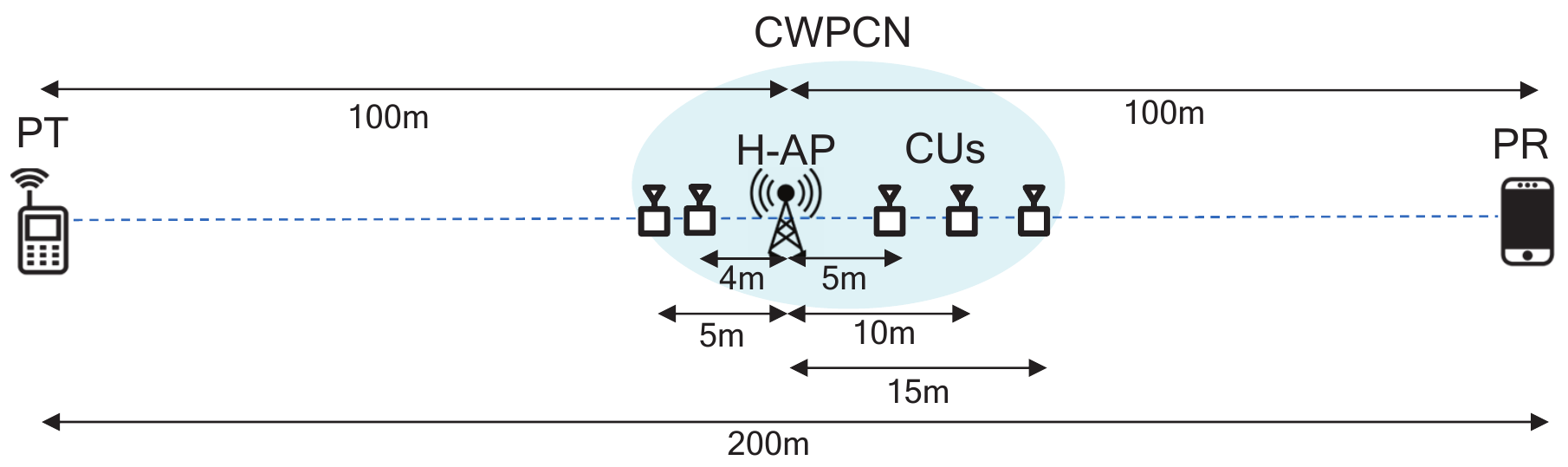} \label{Fig:Middle}}
\subfigure[Case 2]{
\centering
\includegraphics[width=8.5cm]{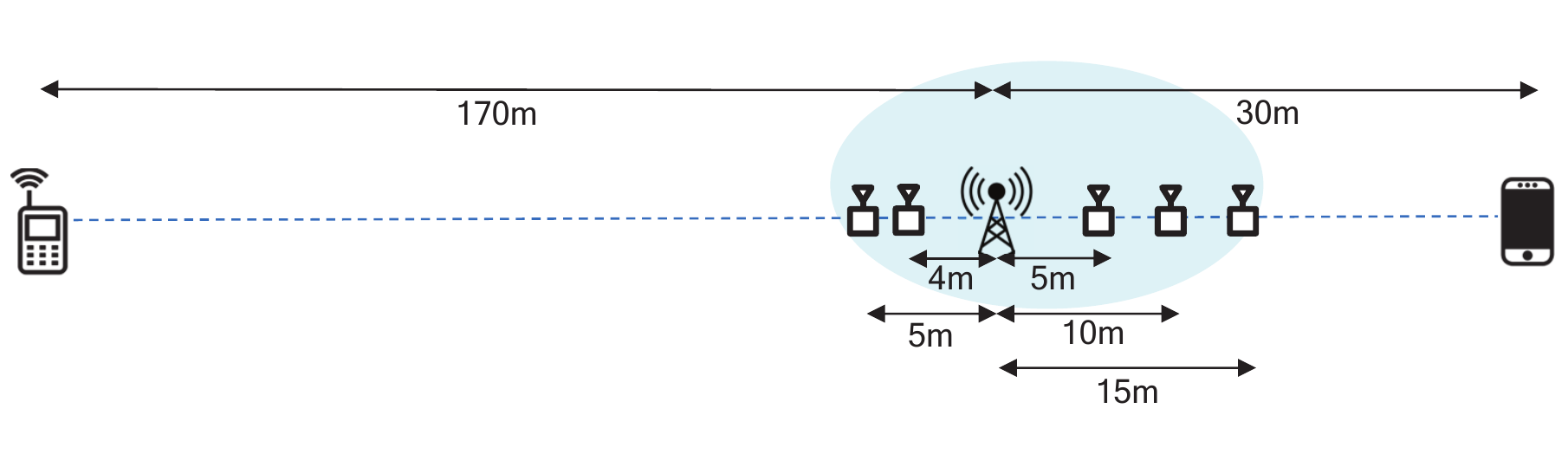} \label{Fig:PR}}
\subfigure[Case 3]{
\centering
\includegraphics[width=8.5cm]{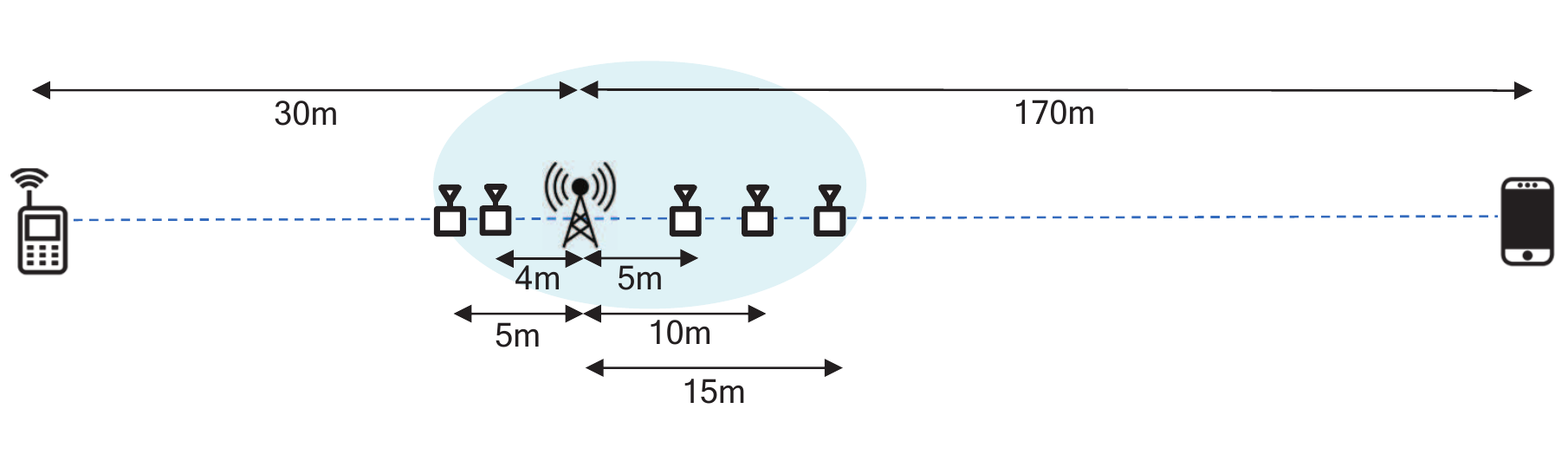}\label{Fig:PT}}
\caption{Simulation setup.} \label{Fig:SimSetup} 
\end{figure}

\begin{figure}
\centering
\subfigure[$P_p = 0.1$W]{
\centering
\includegraphics[width=8.5cm]{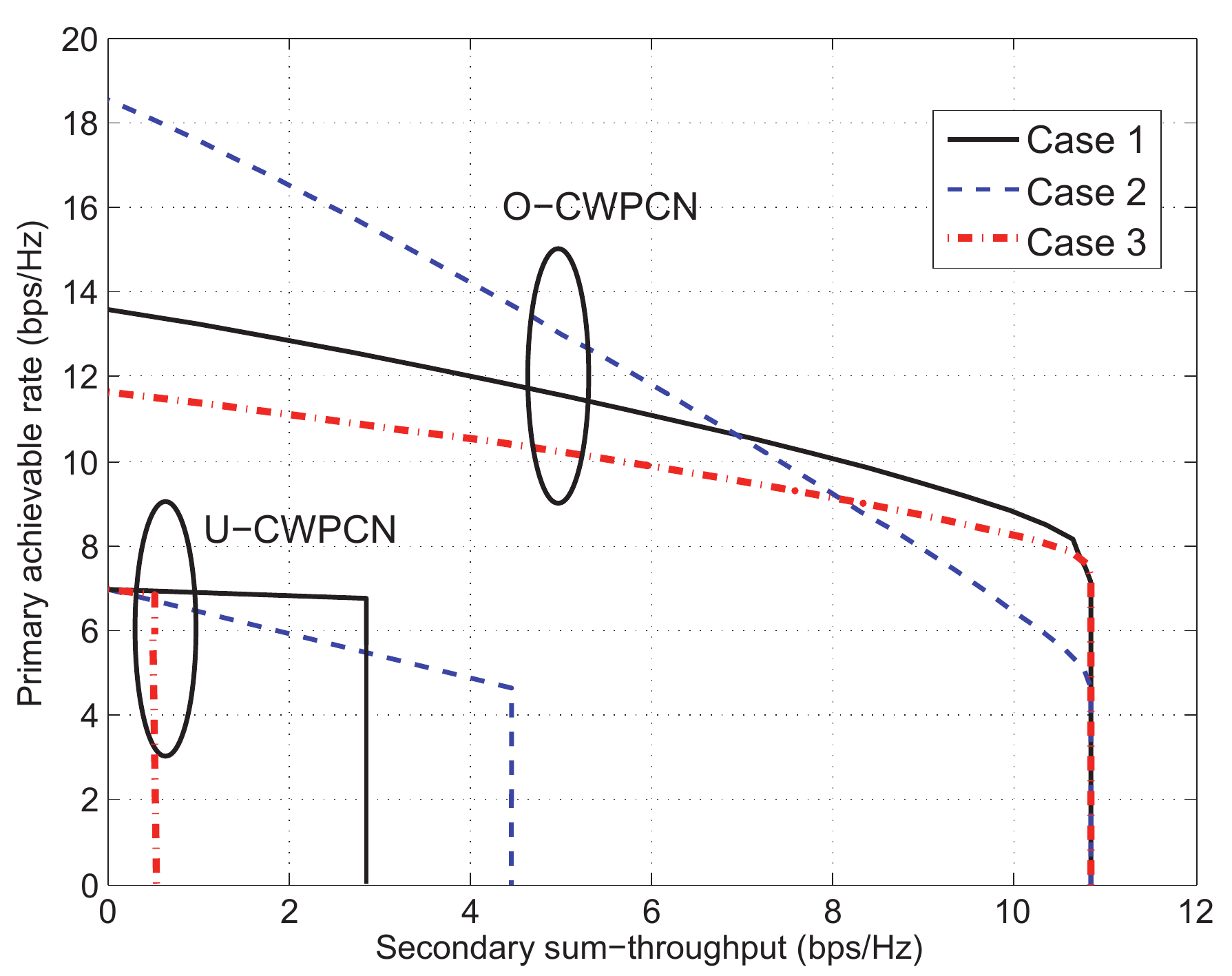} \label{Fig:RateRegion1}}
\subfigure[$P_p = 1$W]{
\centering
\includegraphics[width=8.5cm]{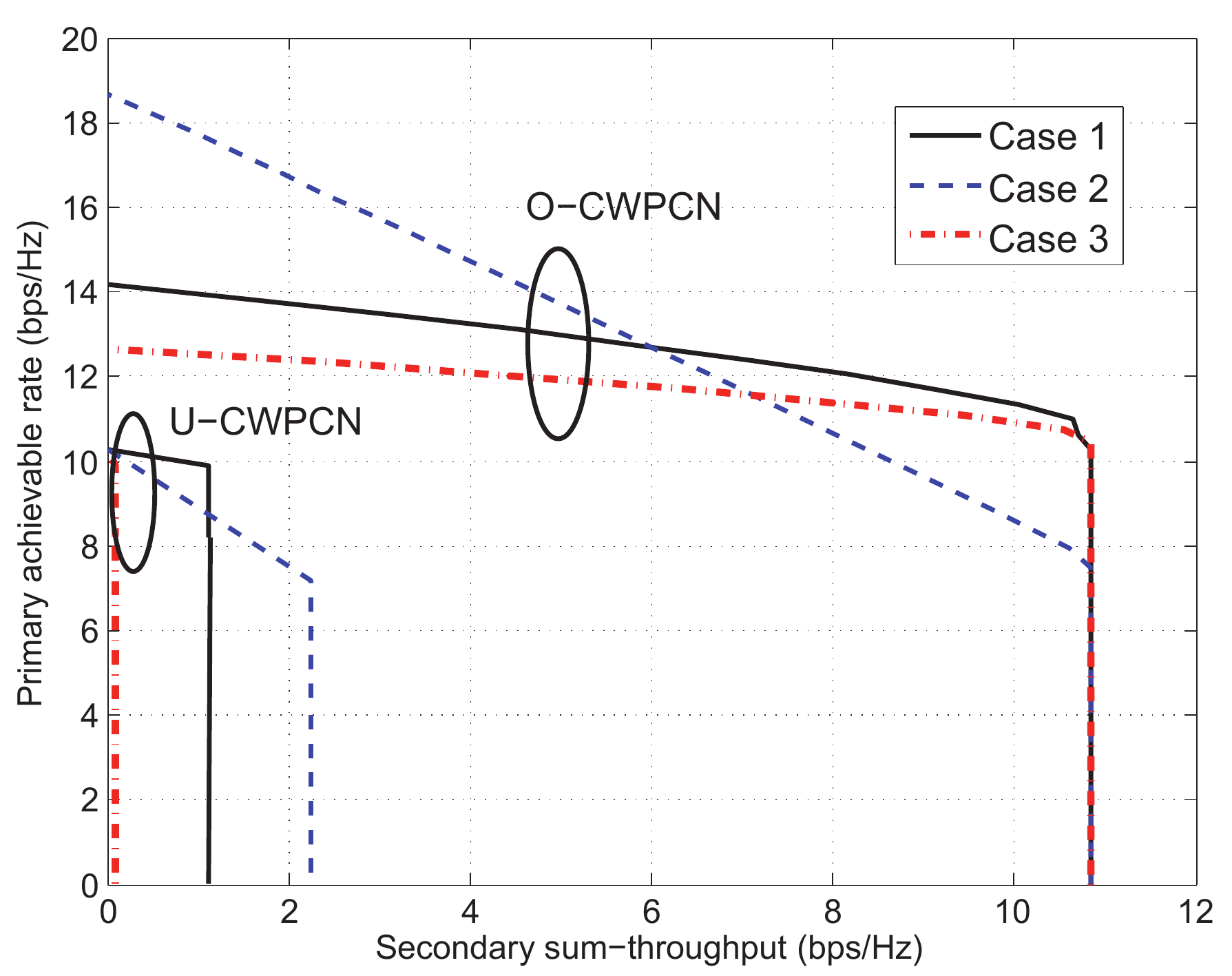} \label{Fig:RateRegion2}}
\caption{Achievable rate regions for U-CWPCN versus O-CWPCN with different PT transmit power $P_p$.} \label{Fig:RateRegion} 
\end{figure}

Under the above setup, the achievable rate regions for U-CWPCN and O-CWPCN defined in \eqref{Eq:RateRegionU} and \eqref{Eq:RateRegionO}, respectively, are shown in Figs.~\ref{Fig:RateRegion1} and \ref{Fig:RateRegion2} with the transmit power of the PT $P_p = 0.1$ Watt (W) and $P_p = 1$W, respectively, where the channel  is assumed to be subject to path-loss attenuation only without fading. Note that we set the maximum transmit power of the H-AP as $\Pmax = 1$W, the path-loss exponent to be $3$, and $20$ dB average signal power attenuation at a reference distance of $1$m. First, it can be observed from both Figs.~\ref{Fig:RateRegion1} and \ref{Fig:RateRegion2} that, for each case, the rate region for O-CWPCN is no smaller than that for U-CWPCN, which is in accordance to Proposition~\ref{Proposition:Performance}. Second, for the O-CWPCN, it can be seen from both the figures that Case 2 leads to the largest primary achievable rate when the secondary sum-throughput is zero, due to the most effective cooperative transmission by the H-AP in closest proximity. Third, for the U-CWPCN, it is observed from both the figures that Case 2 achieves the largest secondary sum-throughput when the primary rate is at minimum, due to the weakest interference from the PT to the H-AP. Last, comparing Figs.~\ref{Fig:RateRegion1} and \ref{Fig:RateRegion2}, it can be seen that the maximum secondary sum-throughput of the U-CWPCN with $P_p=0.1$W is higher than that with $P_p = 1$W for all cases due to the stronger interference from the PT when $P_p$ is larger, whereas the maximum sum-throughput of the O-CWPCN is almost unaffected by $P_p$ since the H-AP is able to cancel the interference from the PT.

Next, in Figs.~\ref{Fig:Rc_Pmax} and \ref{Fig:Rc_Alpha}, we compare the average secondary sum-throughput over flat-fading channels for different node placement cases based on the setup in Fig.~\ref{Fig:SimSetup}, by averaging over randomly generated fading channels. For the channel model, we assume both path-loss attenuation and Rayleigh fading, where a channel power gain $h$ between any two terminals can be expressed as
\begin{equation*}
h = \underline{h}\, c_0 \l(\frac{r}{r_0}\r)^{-\alpha},
\end{equation*} 
where $\underline{h}$ is an exponential random variable with unit mean denoting the Rayleigh fading; $c_0 = -20$ dB is a constant attenuation due to the path-loss at a reference distance $r_0=1$ m; $\alpha$ is the path-loss exponent, and $r$ is the distance between the associated terminals. Note that we set $\Gamma = -60$ dBm for the U-CWPCN (see (P1)), and $\bar{R}=5$ for the O-CWPCN (see (P2)). Furthermore, we set $P_p=0.1$W for Figs.~\ref{Fig:Rc_Pmax} and \ref{Fig:Rc_Alpha}.

Under this setup, Fig.~\ref{Fig:Rc_Pmax} shows the average sum-throughput comparison in different cases for the U-CWPCN and O-CWPCN, respectively, with respect to the maximum H-AP transmit power $\Pmax$. Note that we set $\alpha = 3$ for Fig.~\ref{Fig:Rc_Pmax}. It is observed from Fig.~\ref{Fig:Rc_Pmax} that for both U-CWPCN and O-CWPCN, the average sum-throughput is non-decreasing as $\Pmax$ increases. However, as can be seen from the results, the sum-throughput of the U-CWPCN eventually becomes saturated for all cases as $\Pmax$ becomes sufficiently large, since the H-AP and CUs' transmissions are constrained by the ITC, whereas that of the O-CWPCN can keep increasing for all cases since the H-AP's transmission does not interfere with but instead cooperatively assists the primary transmission, which becomes more effective as $\Pmax$ increases. Next, it can be seen from Fig.~\ref{Fig:Rc_Pmax_Underlay} that for the U-CWPCN, Case 1 shows the best performance when $\Pmax$ is large, which can be explained as follows. When the H-AP is closer to the PR as in Case 2, although the interference from the PT is weaker, the H-AP's DL WET and the CUs' UL WIT are severely constrained by the ITC imposed at the PR, while when the H-AP is closer to the PT as in Case 3, the dominant interference from the PT degrades the secondary sum-throughput significantly. Last, it can be observed from Fig.~\ref{Fig:Rc_Pmax_Overlay} that for the O-CWPCN, Case 3 achieves the best sum-throughput when $\Pmax$ is small, whereas Case 2 shows the worst performance when $\Pmax$ is large. This is because when $\Pmax$ is small, the CWPCN needs to rely on harvesting energy from the PT's transmission which is more effective when the CUs are closer to the PT as in Case 3; however, when $\Pmax$ is large, the interference caused by CUs' UL WIT to the PR becomes more severe if they are closer to the PR as in Case 2, which in turn limits the CUs' transmit power for WIT due to the PRC.

\begin{figure}
\centering
\subfigure[U-CWPCN]{
\centering
\includegraphics[width=8.5cm]{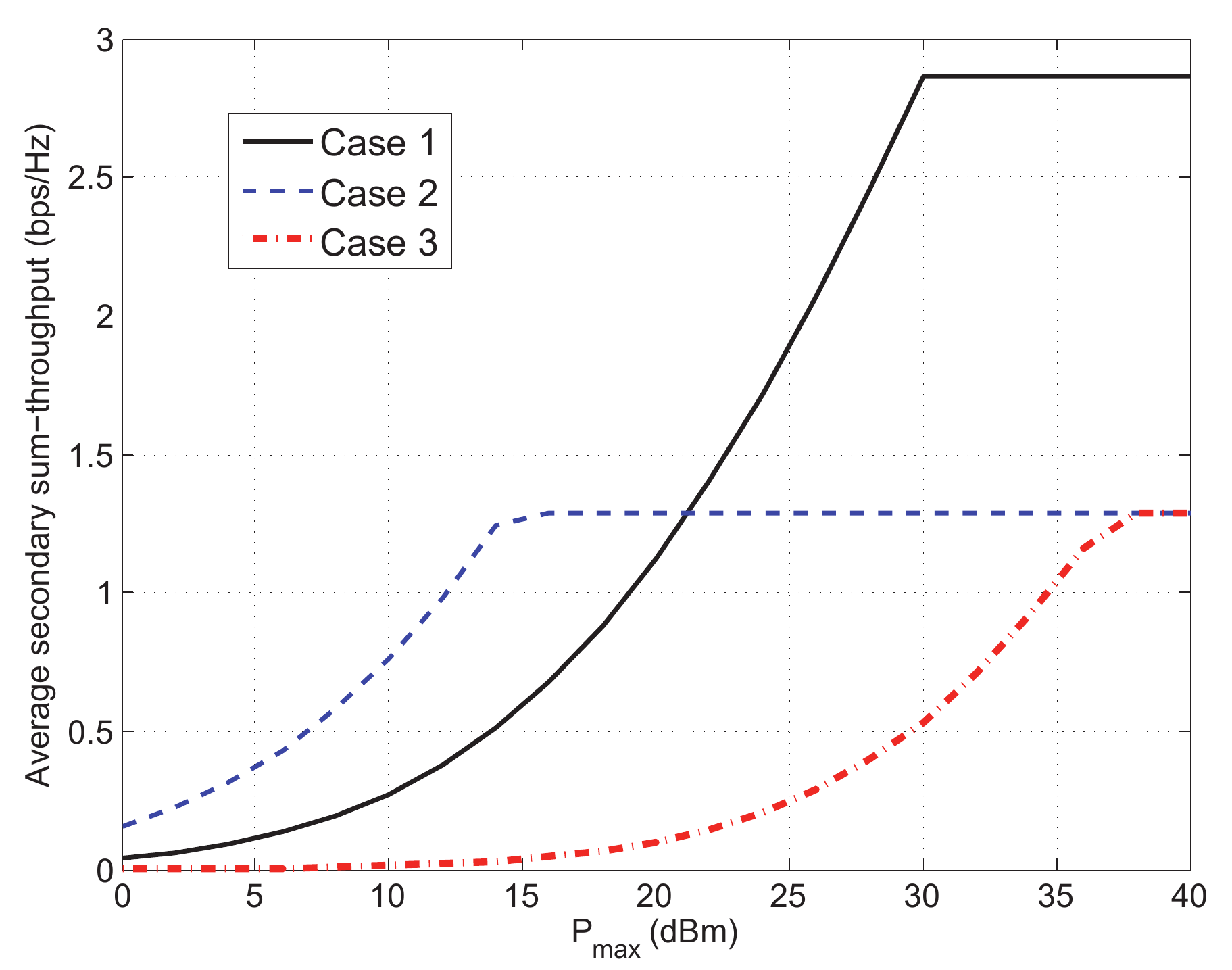} \label{Fig:Rc_Pmax_Underlay}}
\subfigure[O-CWPCN]{
\centering
\includegraphics[width=8.5cm]{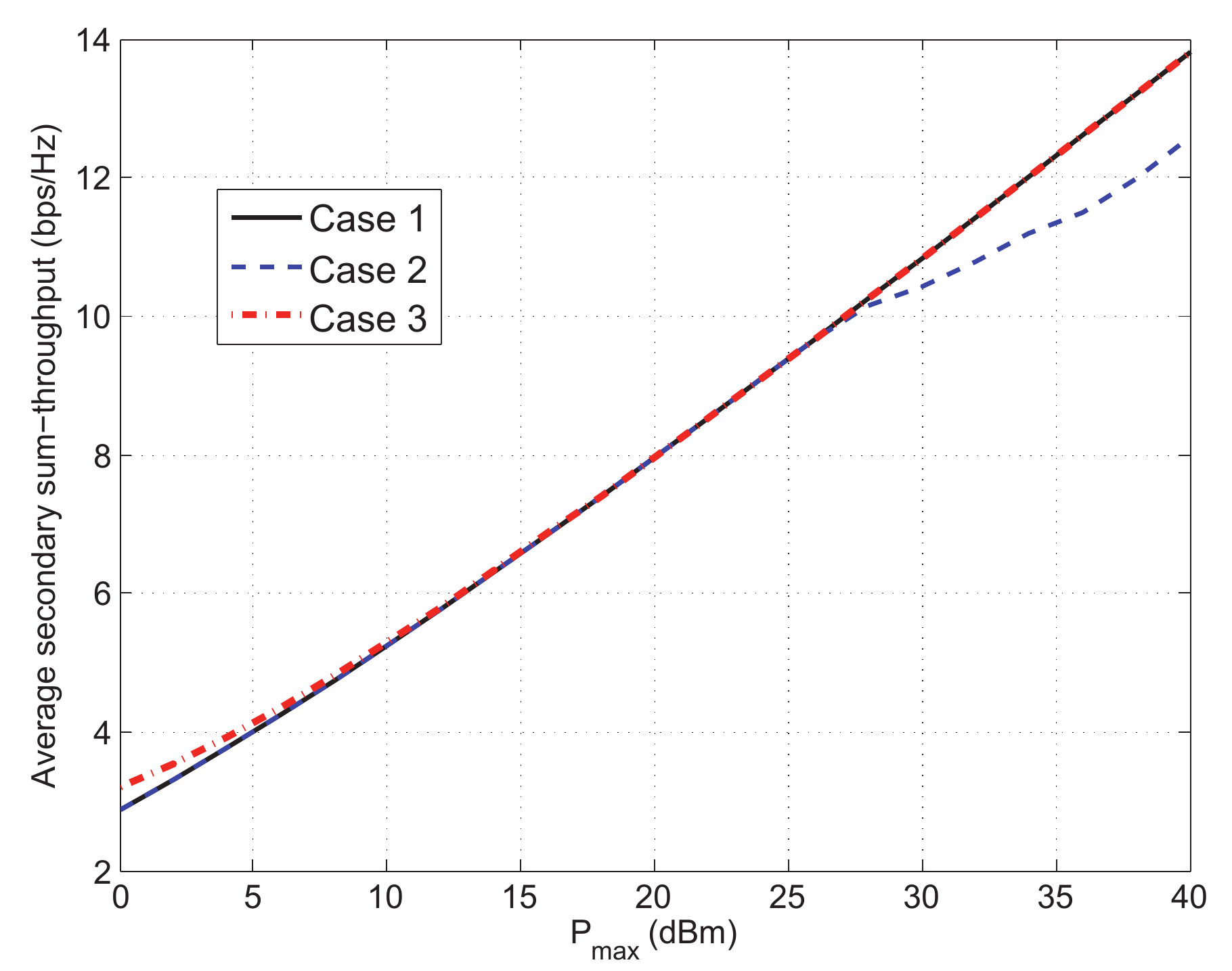} \label{Fig:Rc_Pmax_Overlay}}
\caption{Average secondary sum-throughput versus the maximum H-AP transmit power $\Pmax$.} \label{Fig:Rc_Pmax} 
\end{figure}

In Fig.~\ref{Fig:Rc_Alpha}, we plot the average sum-throughput of the U-CWPCN and O-CWPCN in different node placement cases, versus the path-loss exponent $\alpha$. Note that we set $\Pmax = 1$W for Fig.~\ref{Fig:Rc_Alpha}. First, it can be observed from Fig.~\ref{Fig:Rc_Alpha_Underlay} that, for all cases, as $\alpha$ increases, the average sum-throughput of the U-CWPCN first increases when $\alpha$ is small but finally decreases when $\alpha$ is sufficiently large. This is in sharp contrast to the result in the case of a single stand-alone WPCN \cite{J_JZ:2014}, where more significant path-loss attenuation is always harmful to both DL WET and UL WIT and thus results in decreased sum-throughput. In particular, the result in Fig.~\ref{Fig:Rc_Alpha_Underlay} implies that, in spectrum sharing based CWPCNs, large path-loss could be helpful to a certain extent  since it can help reduce the effect of interference to/from the H-AP. However, in contrast, for the O-CWPCN as shown in Fig.~\ref{Fig:Rc_Alpha_Overlay}, the secondary sum-throughput decreases for all cases as $\alpha$ becomes large, similar to that in a single WPCN, since the H-AP does not cause/receive any interference to/from the primary transmission.

\begin{figure}
\centering
\subfigure[U-CWPCN]{
\centering
\includegraphics[width=8.5cm]{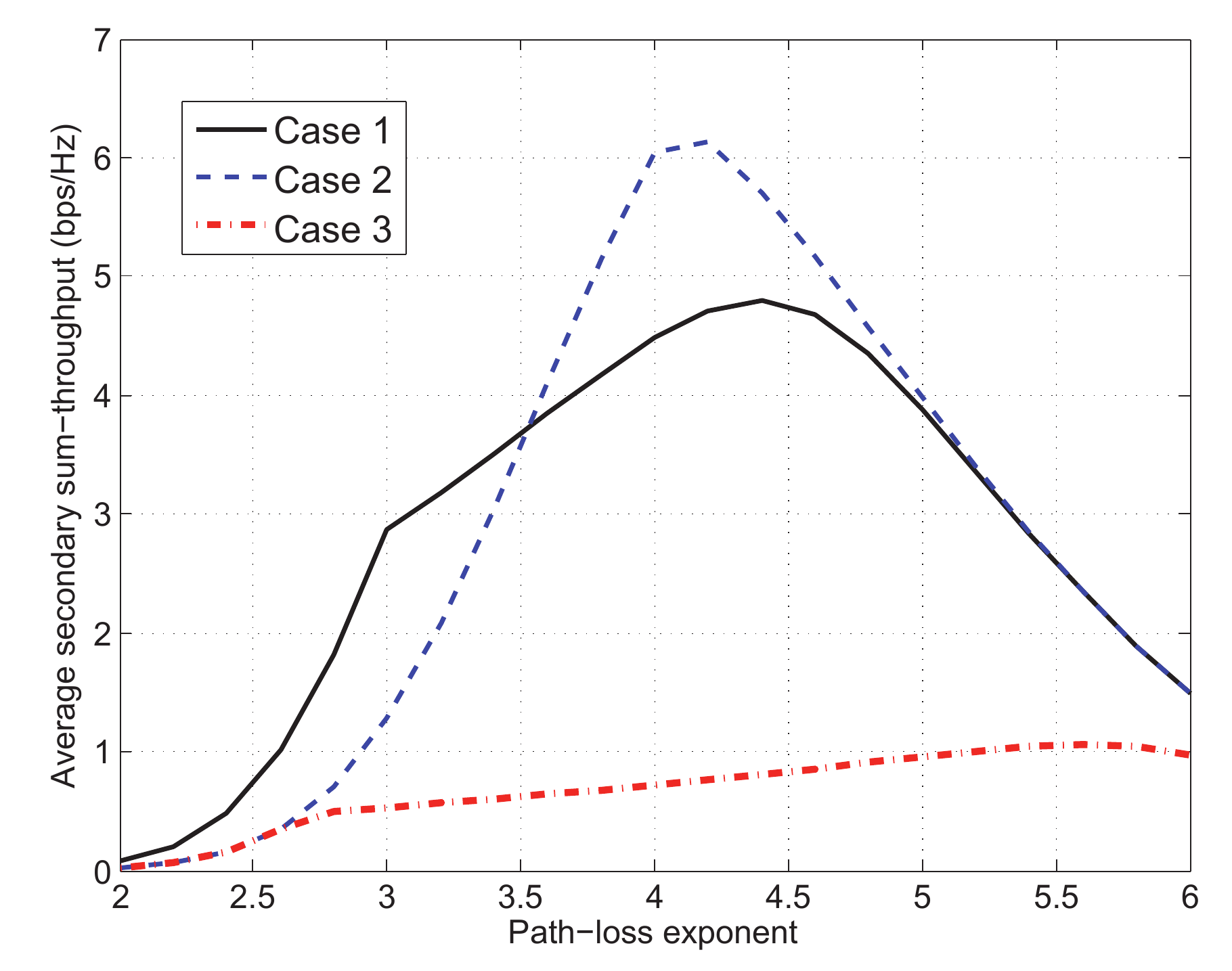} \label{Fig:Rc_Alpha_Underlay}}
\subfigure[O-CWPCN]{
\centering
\includegraphics[width=8.5cm]{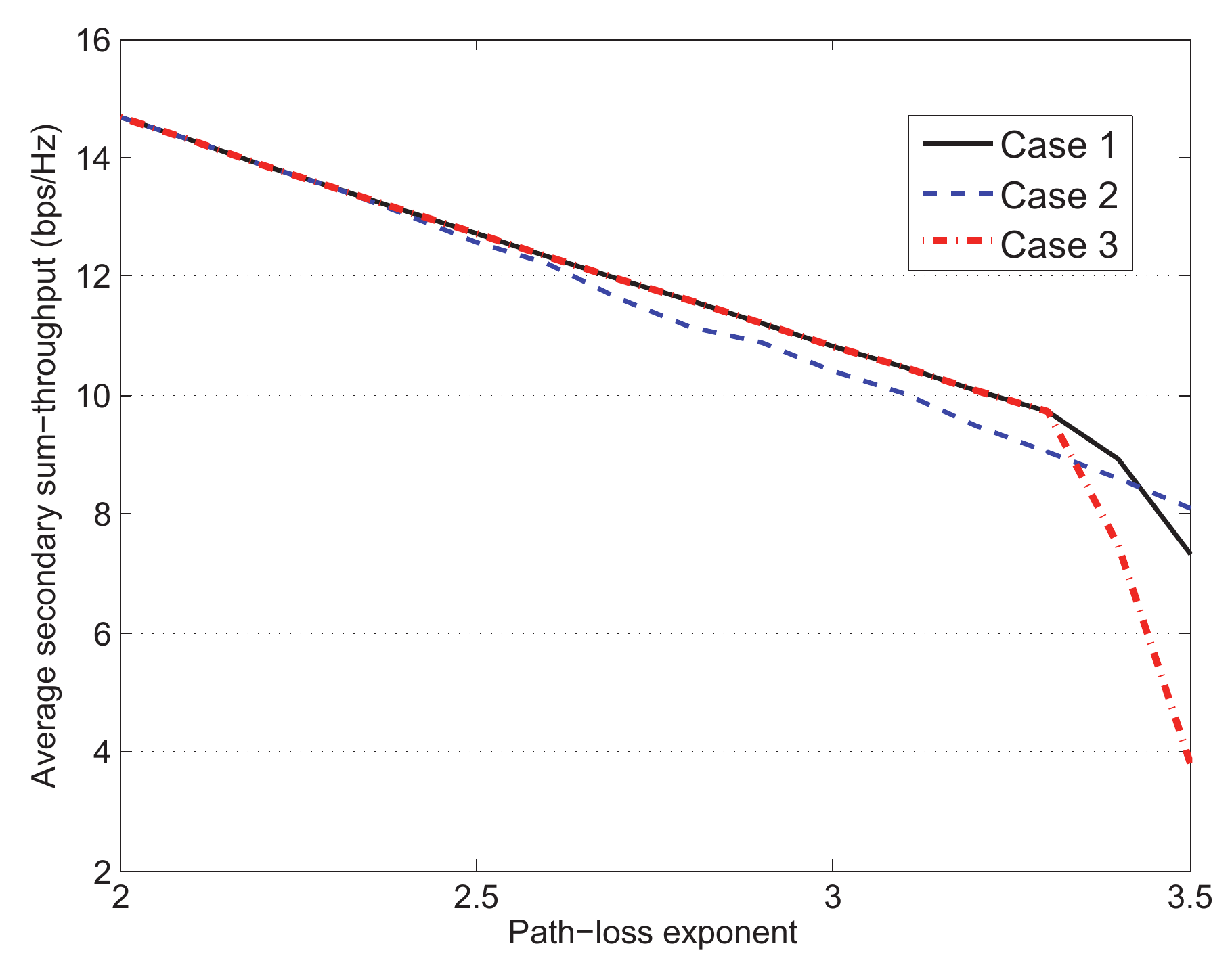} \label{Fig:Rc_Alpha_Overlay}}
\caption{Average secondary sum-throughput versus the path-loss exponent $\alpha$.} \label{Fig:Rc_Alpha} 
\end{figure}

\section{Conclusion and Future Work} \label{Section:Conclusion}
In this paper, we studied a new WPCN termed CWPCN where a secondary or cognitive WPCN shares the same spectrum for its WET and WIT with an existing primary communication system. We  proposed two spectrum sharing models of the CWPCN, namely  U-CWPCN and O-CWPCN, by extending the existing techniques for CR. We solved the sum-throughput maximization problem for the CWPCN in each model under ITC/PRC to ensure the primary communication performance, by jointly optimizing the time and power allocations in the secondary CWPCN for WET and WIT. Both analytical and simulation results were provided to compare the performance  of the CWPCNs under different models and practical setups.

There are some interesting and worth-investigating directions for future work, which are briefly discussed as follows.
\begin{itemize}
\item \emph{General CWPCN Setups:} It will be interesting to extend this paper to more general setups of CWPCN with e.g., multi-channel spectrum sharing, multi-antenna terminals and/or multiple primary links. It is noted that spectrum sharing between a primary MIMO WET system and a secondary MIMO WIT system has been recently studied in \cite{J_XBZ:2015} to minimize the performance degradation of the WIT system due to the interference caused by WET. Nevertheless, there are many interesting network models for the general wireless energy and information transmission coexisting systems not yet explored, and cognitive radio is a key technique for achieving reliable and efficient operation of such systems. 

\item \emph{User Fairness:} In this paper, we focus on sum-throughput maximization of the CWPCN, which may result in unfair rate allocation over different CUs. To overcome this problem, user fairness can be considered in problem formulations with e.g., fixed rate requirement at each CU. In this case, new issues arise such as energy outage due to different energy availability at each CU, for which the optimal transmission design of  CWPCN still remains open. 

\item \emph{Full-Duplex Systems:} The analysis in this paper is based on the assumption that all the user terminals operate in half-duplex mode. If the H-AP/CUs operate in full-duplex mode where they can transmit/harvest energy in the DL and receive/transmit information in the UL at the same time and over the same band \cite{J_JZ:2014b,J_ZZ:2015}, new transmission designs need to be considered, for which further investigation is required. 
\end{itemize}

\appendices

\section{Proof of Proposition~\ref{Proposition:Performance}} \label{Proof:Proposition:Performance}
We first define
\begin{align*}
\tilde{\cR}_O = & \bigcup_{\bar{R}\geq 0}\bigg\{(R_p, R_c): R_p \leq R_{\text{P-O}}(\tau,\{e_i\}), \nn\\ 
&\quad R_c\leq R_{\text{C}}(\tau,\{e_i\}, \{\gamma_i\}), (\tau,\{e_i\})\in \cF_O(\bar{R})\bigg\},
\end{align*}
where $\{\hat{\gamma}_i\}$ in \eqref{Eq:RateRegionO} is replaced with $\{\gamma_i\}$ (recall that $\gamma_i = h_i/(\sigma_c^2 + \hat{g}P_p)$ and $\hat{\gamma}_i = h_i/\sigma_c^2$). Since $\gamma_i \leq \hat{\gamma}_i$, $i=1,...,K$, in general, we have $R_{\text{C}}(\tau,\{e_i\}, \{\gamma_i\}) \leq R_{\text{C}}(\tau,\{e_i\}, \{\hat{\gamma}_i\})$ (see \eqref{Eq:RateCWPCN}). It thus follows that  $\tilde{\cR}_O \subseteq \cR_O$. Hence, to show that $\cR_U\subseteq\cR_O$, it suffices to prove that $\cR_U\subseteq\tilde{\cR}_O$. Suppose $(R_p,R_c)\in\cR_U$. From \eqref{Eq:RateRegionU}, it can be inferred that there must exist a $\Gamma\geq 0 $ and corresponding $(\tau,\{e_i\})\in \cF_U(\Gamma)$ that satisfy $R_p \leq R_\text{P-U}(\tau,\{e_i\})$ and $R_c \leq R_{\text{C}}(\tau,\{e_i\}, \{\gamma_i\})$. It then follows that
\begin{align}
R_p &\leq  R_\text{P-U}(\tau,\{e_i\}) \nn\\
& = \tau \log_2\l(1+\frac{g P_p}{\sigma_p^2 + \hat{h}  P_c}\r)  \nn\\
&\qquad + (1-\tau)\log_2\l(1+\frac{g P_p}{\sigma_p^2 +  \frac{1}{1-\tau}\sum_{i=1}^K \hat{h}_i e_i}\r) \nn\\
&\leq \tau r_1 + (1-\tau) \log_2\l(1+\frac{g P_p}{\sigma_p^2 +  \frac{1}{1-\tau}\sum_{i=1}^K \hat{h}_i e_i}\r) \label{Eq:UsubsetO1}\\
&=  R_{\text{P-O}}(\tau,\{e_i\}), \label{Eq:UsubsetO2}
\end{align}
where the inequality in \eqref{Eq:UsubsetO1} is due to $ \log_2\l(1+\frac{g P_p}{\sigma_p^2 + \hat{h}  P_c}\r) \leq r_1$ and $0\leq \tau \leq 1$ (see \eqref{Eq:PrimaryRate}). Next, with the given $\Gamma$ and $(\tau,\{e_i\})\in \cF_U(\Gamma)$, we set $\bar{R}>0$ such that  $0< \bar{R} \leq R_{\text{P-O}}(\tau,\{e_i\})$, with which it is ensured that $(\tau,\{e_i\})\in \cF_O(\bar{R})$ as well as $(\tau,\{e_i\})\in \cF_U(\Gamma)$. To summarize, if $(R_p,R_c)\in\cR_U$, we can always construct some $\bar{
R}$ and $(\tau,\{e_i\})\in\cF_O(\bar{R})$     such that $R_p\leq R_{\text{P-O}}(\tau,\{e_i\})$ and $R_c \leq R_{\text{C}}(\tau,\{e_i\}, \{\gamma_i\})$,  which implies that $(R_p,R_c)\in\tilde{\cR}_O$.  Thus, $\cR_U \subseteq \tilde{\cR}_O$ is always true. Consequently, we conclude that $\cR_U \subseteq \cR_O$, given $\tilde{\cR}_O \subseteq \cR_O$ as shown above. 

The proof of Proposition~\ref{Proposition:Performance} is thus completed.

\section{Proof of Proposition~\ref{Theorem:Underlay}} \label{Proof:Underlay}
It can be easily verified that (P1.1) is equivalent to the following problem.
\begin{align}
\text{(P1.1-a)}:~\mathop{\mathtt{Maximize}}_{\{e_i\}} &~~ \sum_{i=1}^K \gamma_i e_i  \nn\\
\mathtt{subject\; to}
&~~0\leq e_i \leq \bar{\tau} Q_i, \; i=1,...,K \label{P1.1'Constraint1} \\
&~~\sum_{i=1}^K \hat{h}_i e_i \leq (1-\bar{\tau})\Gamma.  \label{P1.1'Constraint2}
\end{align}
 The Lagrangian of (P1.1-a) can be expressed as
\begin{align*}
&\cL(\{e_i\}, \{\nu_i\}, \{\mu_i\}, \lambda) = \sum_{i=1}^K \gamma_i e_i + \sum_{i=1}^K \nu_i e_i  \nn\\ 
&\quad\qquad - \sum_{i=1}^K \mu_i(e_i - \bar{\tau} Q_i)  - \lambda\l(\sum_{i=1}^K \hat{h}_i e_i - \Gamma(1-\bar{\tau})\r),
\end{align*}
where $(\mu_i, \nu_i)$, $i=1,...,K$, and $\lambda$ are the non-negative dual variables associated with the constraints \eqref{P1.1'Constraint1} and \eqref{P1.1'Constraint2}, respectively. Because (P1.1-a) is convex, the optimal primal and dual solutions must satisfy the KKT conditions given as follows  \cite{B_BV:2004}. 
\begin{align}
\nu_i^\star \bar{e}_i & = 0, \quad i=1,...,K \label{Eq:KKT5}\\
\mu_i^\star(\bar{e}_i - \bar{\tau} Q_i) &= 0, \quad i=1,...,K \label{Eq:KKT1}\\
\lambda^\star\l(\sum_{i=1}^K \hat{h}_i \bar{e}_i - \Gamma(1-\bar{\tau})\r) & = 0, \label{Eq:KKT2}\\
\gamma_i +\nu_i^\star - \mu_i^\star - \lambda^\star \hat{h}_i &= 0, \quad i=1,...,K  \label{Eq:KKT3}
\end{align}
where  $\mu_i^\star$, $\nu_i^\star$, $i=1,...,K$, and $\lambda^\star$ denote the optimal dual variables, and $\{\bar{e}_i\}$ is the optimal solution to (P1.1-a). First, consider the case of $\lambda^\star = 0$, which corresponds to item 1) of Proposition~\ref{Theorem:Underlay}.  From \eqref{Eq:KKT3}, we obtain $\mu_i^\star = \gamma_i + \nu_i^\star > 0$, $i=1,...,K$. Thereby, from \eqref{Eq:KKT1}, it follows that $\bar{e}_i = \bar{\tau} Q_i$, $i=1,..,K$. Since  $\bar{e}_i$'s must satisfy \eqref{P1.1'Constraint2}, we have $\sum_{i=1}^K \hat{h}_i \bar{\tau} Q_i \leq \Gamma(1-\bar{\tau})$, i.e., $0 < \bar{\tau}\leq \frac{\Gamma}{\Gamma + \sum_{i=1}^K}\hat{h}_i Q_i$. This completes the  ``only if'' part of item 1) of Proposition~\ref{Theorem:Underlay}. For the ``if'' part, suppose we set $0 < \bar{\tau}\leq \frac{\Gamma}{\Gamma + \sum_{i=1}^K}\hat{h}_i Q_i$, i.e.,  $\sum_{i=1}^K \hat{h}_i \bar{\tau} Q_i \leq \Gamma(1-\bar{\tau})$. In this case, it can be verified that the choice of $e_i = \bar{\tau} Q_i$, $i=1,...,K$, is feasible as well as optimal for (P1.1-a), i.e., we obain $\bar{e}_i = \bar{\tau} Q_i$, $i=1,...,K$. To summarize, $\bar{e}_i = \bar{\tau} Q_i$, $i=1,...,K$, if and only if $0<\bar{\tau}\leq \frac{\Gamma}{\Gamma + \sum_{i=1}^K}\hat{h}_i Q_i$. 

Next, consider the case of $\lambda^\star > 0$, which corresponds to item 2) of Proposition~\ref{Theorem:Underlay}. In this case, we first show that at most one CU transmits with fractional power, i.e., $0 < \bar{e}_i < \bar{\tau} Q_i$ for some $i$, and the other CUs either transmit with full power or do not transmit at all. We show this by contradiction. Suppose that two CUs, say CU$_m$ and CU$_n$, transmit with fractional power, i.e.,  $0< \bar{e}_m< \bar{\tau} Q_m$ and $0< \bar{e}_n< \bar{\tau} Q_n$. From \eqref{Eq:KKT5} and \eqref{Eq:KKT1}, this results in $\nu_m^\star=\nu_n^\star=0$ and $\mu_m^\star=\mu_n^\star=0$. Thus, from \eqref{Eq:KKT3}, we obtain
\begin{equation*}
\lambda^\star = \frac{\gamma_m}{\hat{h}_m} = \frac{\gamma_n}{\hat{h}_n}.
\end{equation*}
Note that this event occurs with zero probability due to the independence of the continuous random variables $\gamma_m$, $\gamma_n$, $\hat{h}_m$, and $\hat{h}_n$. Therefore, the presumption that two CUs transmit both with fractional power cannot be true. This can be easily extended to the case where more than two CUs transmit with fractional power, and thus we conclude that there is at most one CU$_i$ with $0 < \bar{e}_i < \bar{\tau}Q_i$. Now, suppose CU$_m$ is the one that transmits with fractional power, i.e., $0 < \bar{e}_m< \bar{\tau} Q_m$. From \eqref{Eq:KKT5} and \eqref{Eq:KKT1}, it follows that $\nu_m^\star=0$ and $\mu_m^\star=0$. In the following, we show that if $\frac{\gamma_l}{\hat{h}_l}>\frac{\gamma_m}{\hat{h}_m}$, $l\neq m$, then we have $\bar{e}_l= \bar{\tau} Q_l$, and if $\frac{\gamma_l}{\hat{h}_l}<\frac{\gamma_m}{\hat{h}_m}$, $l\neq m$, then we have $\bar{e}_l=0$. First, suppose that $\frac{\gamma_l}{\hat{h}_l}>\frac{\gamma_m}{\hat{h}_m}$, $l\neq m$. From \eqref{Eq:KKT3}, it follows that
\begin{equation*}
\lambda^\star = \frac{\gamma_l + \nu_l^\star - \mu_l^\star}{\hat{h}_l} =  \frac{\gamma_m}{\hat{h}_m},
\end{equation*}
with which it must follow that $\mu_l^\star>\nu_l^\star\geq 0$  to satisfy $\frac{\gamma_l}{\hat{h}_l}>\frac{\gamma_m}{\hat{h}_m}$. From \eqref{Eq:KKT1}, this results in $\bar{e}_l = \bar{\tau} Q_l$. Similarly, when $\frac{\gamma_l}{\hat{h}_l}<\frac{\gamma_m}{\hat{h}_m}$, $l\neq m$, we can easily show that $\bar{e}_l=0$. Therefore, we conclude that if $m=(k)$, $1\leq k \leq K$, then $\bar{e}_{(i)}=\bar{\tau} Q_{(i)}$ if $i=1,...,k-1$, and $\bar{e}_{(i)} = 0$ if $i=k+1,...,K$, where the set of indices $\{(1),...,(K)\}$ has been defined in the proposition. Since we have $\lambda^\star > 0$, from \eqref{Eq:KKT2}, it must follow that $\sum_{i=1}^K \hat{h}_i \bar{e}_i = \Gamma(1-\bar{\tau})$, i.e., the interference constraint \eqref{P1.1Constraint2} must be tight.  Thus, using  $\bar{e}_{(i)}=\tau Q_{(i)}$, $i=1,...,k-1$, and $\bar{e}_{(i)} = 0$, $i=k+1,...,K$, we obtain
\begin{equation} \label{Eq:UnderlayFractional}
\bar{e}_{(k)} = \frac{1}{\hat{h}_{(k)}}\l(\Gamma(1-\bar{\tau}) - \bar{\tau}\sum_{i=1}^{k-1}\hat{h}_{(i)}Q_{(i)}\r).
\end{equation}
With \eqref{Eq:UnderlayFractional} and the fact that $0 < \bar{e}_{(k)} < \bar{\tau} Q_{(k)}$, we have $\frac{\Gamma}{\Gamma + \sum_{i=1}^{k} \hat{h}_{(i)} Q_{(i)}} < \bar{\tau} < \frac{\Gamma}{\Gamma + \sum_{i=1}^{k-1} \hat{h}_{(i)} Q_{(i)}}$. Moreover, it is easy to see that when $\bar{\tau} = \frac{\Gamma}{\Gamma + \sum_{i=1}^{k-1} \hat{h}_{(i)} Q_{(i)}}$, we obtain $\bar{e}_{(i)} = \bar{\tau} Q_{(i)}$, $i=1,...,k-1$, and $\bar{e}_{(i)} = 0$, $i=k,k+1,...,K$, i.e., each CU either transmits with full power or zero power. This shows the ``only if'' part of item 2) of Proposition~\ref{Theorem:Underlay}. Next, consider the ``if'' part of item 2) of Proposition~\ref{Theorem:Underlay}. Suppose we set $\frac{\Gamma}{\Gamma + \sum_{i=1}^{k} \hat{h}_{(i)} Q_{(i)}} < \bar{\tau} \leq \frac{\Gamma}{\Gamma + \sum_{i=1}^{k-1} \hat{h}_{(i)} Q_{(i)}}$, i.e., $\frac{\bar{\tau}}{1-\bar{\tau}}\sum_{i=1}^{k-1}\hat{h}_{(i)}Q_{(i)} \leq \Gamma < \frac{\bar{\tau}}{1-\bar{\tau}}\sum_{i=1}^{k}\hat{h}_{(i)}Q_{(i)}$. Due to the fact that at most one CU transmits with fractional power, it is easy to see that the choice of $\bar{e}_{(i)} = \bar{\tau}Q_{(i)}$, $i=1,...,k-1$, $\bar{e}_{(k)} = \frac{1}{\hat{h}_{(k)}}\l(\Gamma(1-\bar{\tau}) - \bar{\tau}\sum_{i=1}^{k-1}\hat{h}_{(i)}Q_{(i)}\r)$, and $\bar{e}_{(i)}=0$, $i=k+1,...,K$, is feasible as well as optimal for (P1.1-a). 

The proof of Proposition~\ref{Theorem:Underlay} is thus completed. 

\section{Proof of Proposition~\ref{Proposition:ActiveInterference}} \label{Proof:Proposition:ActiveInterference}
We first show that $R^\star \geq R_\text{a}^\star(\tilde{\Gamma}_0)$. It can be easily verified that for any given $\Gamma_0\geq 0$, if $(\tau, \{e_i\})$ is feasible for (P2-a), then it is always feasible for (P2), but not vice versa. This implies that the optimal value of (P2-a) is in general a lower bound of that of (P2), i.e., $R^\star\geq R_\text{a}^\star(\Gamma_0)$, $\forall \Gamma_0\geq 0$. This gives $R^\star \geq R_\text{a}^\star(\tilde{\Gamma}_0)$. 

Next, we show that $R^\star\leq R_\text{a}^\star(\tilde{\Gamma}_0)$ is also true. Let $(\tau^\star,\{e_i^\star\})$ denote the optimal solution of (P2), and define $\Gamma^\star = \frac{1}{1-\tau^\star}\sum_{i=1}^K \hat{h}_i e_i^\star$, i.e., $\Gamma^\star$ is the resulting interference-temperature level at the PR with the optimal solution of (P2). With $\Gamma_0 = \Gamma^\star$ in (P2-a), it can be easily verified that $(\tau^\star,\{e_i^\star\})$ is a  feasible solution of (P2-a), i.e., $R_\text{a}^\star(\Gamma^\star) \geq R^\star$. Since $R_\text{a}^\star(\tilde{\Gamma}_0) \geq R_\text{a}^\star(\Gamma_0)$, $\forall \Gamma_0\geq 0$, it follows that $R_\text{a}^\star(\tilde{\Gamma}_0) \geq R_\text{a}^\star(\Gamma^\star) \geq R^\star$. 

To summarize, we have shown that both $R^\star \geq R_\text{a}^\star(\tilde{\Gamma}_0)$ and $R^\star \leq R_\text{a}^\star(\tilde{\Gamma}_0)$ are true, and thus we conclude that $R^\star = R_\text{a}^\star(\tilde{\Gamma}_0)$. The proof is thus completed.

\section{Proof of Proposition~\ref{Proposition:Overlay}} \label{Proof:Proposition:Overlay}

It can be verified from Proposition~\ref{Theorem:Underlay} that the constraint \eqref{Eq:P2.1Constraint4}  is tight, i.e., $\frac{1}{1-\tau(\Gamma_0)}\sum_{i=1}^K \hat{h}_ie_i(\Gamma_0) = \Gamma_0$, if and only if $\tau(\Gamma_0) \geq \frac{\Gamma_0}{\Gamma_0 + \sum_{i=1}^K \hat{h}_i Q_i}$.  Thus, to prove Proposition~\ref{Proposition:Overlay}, it suffices to show that  $\tau(\Gamma_0) \geq \frac{\Gamma_0}{\Gamma_0 + \sum_{i=1}^K \hat{h}_i Q_i}$ if and only if $\Gamma_0 \in \cF_{\Gamma_0}$, where $\cF_{\Gamma_0}$ is defined in \eqref{Eq:GammaTightSet}. For the purpose of exposition, we present the following corollary, which is a direct consequence of Proposition~\ref{Theorem:Underlay}. 
\begin{corollary} \label{Corollary:Underlay}
We have $\hat{\tau}^\star \geq \frac{\Gamma_0}{\Gamma_0 + \sum_{i=1}^K\hat{h}_i Q_i}$ if and only if $\tilde{\tau}_K \geq \frac{\Gamma_0}{\Gamma_0 + \sum_{i=1}^K\hat{h}_i Q_i}$.
\end{corollary}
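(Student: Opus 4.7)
\textbf{Proof proposal for Corollary~\ref{Corollary:Underlay}.} The plan is to exploit the fact that, on the leftmost subinterval of the piecewise description of $\bar R(\bar\tau)$ in \eqref{Eq:f_k} (with $\gamma_i$ replaced by $\hat\gamma_i$), $\bar R$ coincides with the single concave function whose unconstrained maximizer is exactly $\tilde\tau_K$. Then the claim reduces to a one-sided derivative comparison at the boundary point $A:=\Gamma_0/(\Gamma_0+\sum_{i=1}^K\hat h_i Q_i)$.

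Concretely, first I would introduce the abbreviation
$$
\tilde f(\tau)\,:=\,(1-\tau)\log_2\!\Big(1+\tfrac{\tau}{1-\tau}\textstyle\sum_{i=1}^K \hat\gamma_i Q_i\Big),
$$
and observe two facts: (i) by the first branch of \eqref{Eq:f_k} (with $\gamma_i\mapsto\hat\gamma_i$, $\Gamma\mapsto\Gamma_0$), $\bar R(\tau)=\tilde f(\tau)$ for all $\tau\in(0,A]$; and (ii) $\tilde f$ is strictly concave on $(0,1)$ with a unique interior maximizer which, by the defining first-order condition in the statement of Proposition~\ref{Proposition:Overlay}, is exactly $\tilde\tau_K$. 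Consequently $\tilde f'(\tau)>0$ for $\tau<\tilde\tau_K$, $\tilde f'(\tilde\tau_K)=0$, and $\tilde f'(\tau)<0$ for $\tau>\tilde\tau_K$.

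Next I would use the concavity of $\bar R$ on $(0,1)$ (established just after \eqref{Eq:P1_tau}) together with the standard characterization of argmax for a concave function: the unique maximizer $\hat\tau^\star$ satisfies $\hat\tau^\star\ge A$ iff the left derivative $\bar R'(A^-)\ge 0$. Since $\bar R\equiv\tilde f$ on $(0,A]$, we have $\bar R'(A^-)=\tilde f'(A)$, so
$$
\hat\tau^\star\ge A \ \Longleftrightarrow\ \tilde f'(A)\ge 0 \ \Longleftrightarrow\ A\le \tilde\tau_K,
$$
which is precisely the claim. For the forward direction one further checks that if $A\le\tilde\tau_K$, then $\bar R'(A^-)\ge 0$ forces, by monotonicity of the (super)gradient of a concave function, $\hat\tau^\star\ge A$; for the reverse direction, if $A>\tilde\tau_K$ then $\bar R'(A^-)=\tilde f'(A)<0$, and again by concavity $\bar R$ is already strictly decreasing at $A^-$, hence $\hat\tau^\star<A$.

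The only mild obstacle is the boundary case $A=\tilde\tau_K$: here $\tilde f'(A)=0$ and the right derivative $\bar R'(A^+)$ may be strictly smaller than the left one (since the active ITC kicks in across $A$). However, strict concavity of $\tilde f$ combined with concavity of $\bar R$ guarantees $\hat\tau^\star=A$ in that case, so the ``$\ge$'' remains intact and the equivalence holds without further work. No non-trivial computation is required beyond evaluating $\tilde f'(A)$ and invoking the quoted concavity of $\bar R$.
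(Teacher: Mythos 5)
Your proposal is correct, and it is essentially the argument the paper compresses into the unproved remark that the corollary is ``a direct consequence of Proposition~\ref{Theorem:Underlay}'': on $\bigl(0,\Gamma_0/(\Gamma_0+\sum_{i=1}^K\hat{h}_iQ_i)\bigr]$ the value function $\bar{R}$ coincides with the strictly concave full-power throughput whose stationary point is $\tilde{\tau}_K$, and combining this with the concavity of $\bar{R}$ and the sign of its left derivative at the breakpoint yields exactly the stated equivalence. Your handling of the boundary case and of the possible kink at the breakpoint is sound, so no gap remains.
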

Note that $\tilde{\tau}_K$ in the above has been defined in the proposition. 

First, consider the ``if'' part. Suppose we have $\Gamma_0 \in \cF_{\Gamma_0}$, i.e., $ \tilde{\tau}_K\geq \frac{\Gamma_0}{\Gamma_0 + \sum_{i=1}^K \hat{h}_i Q_i}$ or $\frac{\bar{R} - r_2(\Gamma_0)}{r_1 - r_2(\Gamma_0)}\geq \frac{\Gamma_0}{\Gamma_0 + \sum_{i=1}^K \hat{h}_i Q_i}$. For the former case of $\tilde{\tau}_K\geq \frac{\Gamma_0}{\Gamma_0 + \sum_{i=1}^K \hat{h}_i Q_i}$, we obtain $\hat{\tau}^\star\geq \frac{\Gamma_0}{\Gamma_0 + \sum_{i=1}^K \hat{h}_i Q_i}$ from Corollary~\ref{Corollary:Underlay}; thus, from \eqref{Eq:tau_Gamma}, it must follow that $\tau(\Gamma_0) \geq \frac{\Gamma_0}{\Gamma_0 + \sum_{i=1}^K \hat{h}_i Q_i}$. For the latter case of $\frac{\bar{R}-r_2(\Gamma_0)}{r_1 - r_2(\Gamma_0)}\geq \frac{\Gamma_0}{\Gamma_0 + \sum_{i=1}^K \hat{h}_i Q_i}$, it is straightforward from \eqref{Eq:tau_Gamma} that $\tau(\Gamma_0) \geq \frac{\Gamma_0}{\Gamma_0 + \sum_{i=1}^K \hat{h}_i Q_i}$. As a result, it follows that if $\Gamma_0 \in \cF_{\Gamma_0}$, then $\tau(\Gamma_0) \geq \frac{\Gamma_0}{\Gamma_0 + \sum_{i=1}^K \hat{h}_i Q_i}$.

Next, consider the ``only if'' part. Suppose the obtained optimal $\tau$ of (P2-a) satisfies $\tau(\Gamma_0) \geq \frac{\Gamma_0}{\Gamma_0 + \sum_{i=1}^K \hat{h}_i Q_i}$. We consider two cases: $\tilde{\tau}_K < \frac{\Gamma_0}{\Gamma_0 + \sum_{i=1}^K \hat{h}_i Q_i}$ or $ \tilde{\tau}_K \geq \frac{\Gamma_0}{\Gamma_0 + \sum_{i=1}^K \hat{h}_i Q_i}$. For the former case of $\tilde{\tau}_K < \frac{\Gamma_0}{\Gamma_0 + \sum_{i=1}^K \hat{h}_i Q_i}$, it can be easily verified from \eqref{Eq:tau_Gamma} and Corollary~\ref{Corollary:Underlay} that, in order to obtain $\tau(\Gamma_0) \geq \frac{\Gamma_0}{\Gamma_0 + \sum_{i=1}^K \hat{h}_i Q_i}$, it must follow that $\frac{\bar{R}-r_2(\Gamma_0)}{r_1-r_2(\Gamma_0)}\geq \frac{\Gamma_0}{\Gamma_0 + \sum_{i=1}^K \hat{h}_i Q_i}$. For the latter case of $ \tilde{\tau}_K \geq \frac{\Gamma_0}{\Gamma_0 + \sum_{i=1}^K \hat{h}_i Q_i}$,  we can infer from Corollary~\ref{Corollary:Underlay} that we always obtain $\tau(\Gamma_0) \geq \frac{\Gamma_0}{\Gamma_0 + \sum_{i=1}^K \hat{h}_i Q_i}$. As a result, it follows that if $\tau(\Gamma_0) \geq \frac{\Gamma_0}{\Gamma_0 + \sum_{i=1}^K \hat{h}_i Q_i}$, then $\Gamma_0 \in \cF_{\Gamma_0}$. 

We have shown that $\tau(\Gamma_0) \geq \frac{\Gamma_0}{\Gamma_0 + \sum_{i=1}^K \hat{h}_i Q_i}$ if and only if $\Gamma_0 \in \cF_{\Gamma_0}$. As mentioned above, since the constraint \eqref{Eq:P2.1Constraint4} is tight if and only if $\tau(\Gamma_0) \geq \frac{\Gamma_0}{\Gamma_0 + \sum_{i=1}^K \hat{h}_i Q_i}$, the proof of Proposition~\ref{Proposition:Overlay} is thus completed.

\bibliographystyle{IEEEtran}
%\bibliography{IEEEabrv,mybib}{}

\newpage

\end{document}